%%%  super.tex
%%%  Random trees with superexponential branching weights
%%%  Svante Janson, Þórður Jónsson and Sigurður Stefánsson 
%%% 	 version by Svante 21 Feb 2011
%%% mior corrections by TJ 12 April 2011.

\documentclass[11pt,a4,reqno,tbtags]{amsart}
\usepackage{amsmath}
\usepackage{amsfonts}
\usepackage{amssymb}
\usepackage{latexsym}
\usepackage{graphicx}
\usepackage[utf8]{inputenc}
\usepackage{mathrsfs}
\usepackage{epstopdf}
\usepackage{hyperref}
\usepackage{url}

\allowdisplaybreaks
%\addtolength{\baselineskip}{1.5\baselineskip}
%\renewcommand{\split}[1]{\mathrm{Split}\left(#1\right)}
\numberwithin{equation}{section}

\newtheorem{proposition}{Proposition}[section]

\newtheorem{theorem}[proposition]{Theorem}

\newtheorem{lemma}[proposition]{Lemma}

\theoremstyle{definition}
\newtheorem{remark}[proposition]{Remark}

\newtheorem*{ack}{Acknowledgement}

\theoremstyle{remark}

\renewcommand\P{\mathbb{P}}

\newcommand{\bea}{\begin{eqnarray}}
\newcommand{\eea}{\end{eqnarray}}

\def\void{}
\def\labelmark{}

\newenvironment{formula}[1]{\def\labelname{#1}
\ifx\void\labelname\def\junk{\begin{displaymath}}
\else\def\junk{\begin{equation}\label{\labelname}}\fi\junk}%
{\ifx\void\labelname\def\junk{\end{displaymath}}
\else\def\junk{\end{equation}}\fi\junk\labelmark\def\labelname{}}

{\ifx\void\labelname\def\junk{\end{array}\end{displaymath}}
\else\def\junk{\end{array}\right.\end{equation}}
\fi\junk\labelmark\def\labelname{}\def\junk{}
}

\newcommand{\beq}{\begin{formula}}
\newcommand{\eeq}{\end{formula}}
\newcommand{\beqv}{\begin{formula}{}}

%%%% Svante's
\newcommand\urladdrx[1]{{\urladdr{\def~{{\tiny$\sim$}}#1}}}

\newenvironment{romenumerate}[1][0pt]{% optional argument changes indentation
\addtolength{\leftmargini}{#1}\begin{enumerate}% gives (i), (ii) etc.
 }{\end{enumerate}}

\newcounter{oldenumi}
\newenvironment{romenumerateq}% continues numbering from previous romenumerate
{\setcounter{oldenumi}{\value{enumi}}
\begin{romenumerate} \setcounter{enumi}{\value{oldenumi}}}
{\end{romenumerate}}

\begingroup
  \count255=\time
  \divide\count255 by 60
  \count1=\count255
  \multiply\count255 by -60
  \advance\count255 by \time
  \ifnum \count255 < 10 \xdef\klockan{\the\count1.0\the\count255}
  \else\xdef\klockan{\the\count1.\the\count255}\fi
\endgroup

% Marginal notes 
%\newcommand\marginal[1]{\marginpar{\raggedright\parindent=0pt\tiny #1}}
%\newcommand\marginal[1]{\marginpar[\raggedleft\tiny #1]{\raggedright\tiny #1}}

%\newcommand{\ms}[1]{\par \vdots \texttt{[ms #1]} \par\vdots}

%\newcommand{\mer}[1]{\texttt{[#1]}\marginal{MER!}}
%\newcommand\REV{\marginal{Revidera!!}}
\newcommand\REM[1]{{\raggedright\texttt{[#1]}\par\marginal{XXX}}}

\newcommand\ga{\alpha}
\newcommand\gb{\beta}
\newcommand\gd{\delta}

\newcommand\gG{\Gamma}

\newcommand\gl{\lambda}

\newcommand\gs{\sigma}

\newcommand\eps{\varepsilon}

\newcommand{\cN}{\mathcal{N}}

\newcommand\set[1]{\ensuremath{\{#1\}}}

\newcommand\bigpar[1]{\bigl(#1\bigr)}
\newcommand\Bigpar[1]{\Bigl(#1\Bigr)}

\newcommand\lrpar[1]{\left(#1\right)}

\newcommand\lrcpar[1]{\left\{#1\right\}}

\newcommand\Bigabs[1]{\Bigl|#1\Bigr|}

\def\rompar(#1){\textup(#1\textup)}    % usage: \rompar(...)

\newcommand\parfrac[2]{\lrpar{\frac{#1}{#2}}}

\newcommand\Bigparfrac[2]{\Bigpar{\frac{#1}{#2}}}

\def\xexp(#1){e^{#1}}
\newcommand\ceil[1]{\lceil#1\rceil}
\newcommand\floor[1]{\lfloor#1\rfloor}

\newcommand\Ntoo{\ensuremath{{N\to\infty}}}

\newcommand\upto{\nearrow}

\newcommand{\tend}{\longrightarrow}
\newcommand\dto{\overset{\mathrm{d}}{\tend}}
\newcommand\pto{\overset{\mathrm{p}}{\tend}}

\newcommand\bbN{\mathbb N}

\newcounter{CC}
\newcommand{\CC}{\stepcounter{CC}\CCx} %new constant C_i
\newcommand{\CCx}{C_{\arabic{CC}}}     %repeats the last C_i
\newcommand{\CCdef}[1]{\xdef#1{\CCx}}     %defines #1 as the last C_i
\newcounter{cc}
 %new constant c_i
     %repeats the last c_i
     %defines #1 as the last c_i

\newcommand\sumn{\sum_{n=0}^\infty}
\newcommand\sumik{\sum_{i=1}^K}
\newcommand\prodik{\prod_{i=1}^K}
\newcommand\dx{D_*}
\newcommand\neps{L} %{N_\eps}}
\newcommand\Pois{\operatorname{Pois}}
\newcommand\tD{\widetilde D}
\newcommand\etto{\bigpar{1+o(1)}}
\newcommand\nx{n^*}
\newcommand\nnx{\nx_1,\dots,\nx_K}
\newcommand\nnxx{\mathbf{n^*}}
\newcommand\nnq{n_1,\dots,n_K}
\newcommand\nnqq{\mathbf{n}}
\newcommand\mm{m_1,\dots,m_K}
\newcommand\mmqq{\mathbf{m}}
\newcommand\gdij{\gd_{ij}}
%\newcommand\{}
%\newcommand\{}
%\newcommand\{}
%\newcommand\{}
%\newcommand\{}
%QQQ

\begin{document}
\title%[]
{Random trees with superexponential branching weights}

\date{14 May, 2011}   %\Small
\subjclass[2000]{05C80, 05C05, 60J80, 60F05} 
\keywords{Random trees, simply generated trees, branching process, weak limit.}%; revised ...

\author{Svante Janson}
\address{Department of Mathematics, Uppsala University, PO Box 480,
SE-751~06 Uppsala, Sweden}
\email{svante.janson@math.uu.se}
\urladdrx{http://www2.math.uu.se/~svante/}

\author{Thordur Jonsson}  %% Þórður Jónsson 
\address{The Science Institute, University of Iceland,
Dunhaga 3, 107 Reykjavik, Iceland}
\email{thjons@raunvis.hi.is}

\author{Sigurdur \"Orn Stef\'ansson} %%  Sigurður 
\address{NORDITA,
Roslagstullsbacken 23, SE-106 91 Stockholm, Sweden}
\email{sigste@nordita.org}

%\keywords{<keywords>}

%{60C05 (68P10,68W40)} %%{Primary: <subject>; Secondary: <subject>}

\begin{abstract} 
We study rooted planar random trees with a probability distribution
which is proportional to a product of weight factors $w_n$ associated
to the vertices of the tree and depending only on their individual
degrees $n$.  We focus on the case when $w_n$ grows faster than
exponentially with $n$.  In this case the measures on trees of finite
size $N$ converge weakly as $N$ tends to infinity to a measure which
is concentrated on a single tree with one vertex of infinite degree.
For explicit weight factors of the form $w_n=((n-1)!)^\alpha$ with
$\alpha >0$ we obtain more refined results about the approach to
the infinite volume limit.
\end{abstract}

\maketitle

\section{Introduction}
Random trees have been studied intensively by mathematicians and
theoretical physicists in the last few decades.  They have direct 
applications to many branches in science, they are essential in many
mathematical models used by physicists and are a natural object to
study from the point of view of pure mathematics.  

The random trees we are concerned with here were originally called {\it
simply generated trees} by probabilists \cite{MeirMoon}.  Later the
same tree ensembles were referred to as {\it random trees with a local
action} by physicists and viewed as toy models in statistical
mechanics and for some aspects of quantum gravity, see e.g.\ \cite{adj}.

Simply generated trees with $N$ vertices can be defined as follows:
Let $(w_n)_{n\geq 1}$ be a sequence of nonnegative numbers which we
will call {\it branching weights}.
If $T$ is a tree graph with vertex set $V(T)$ having $N$ elements we define a
probability distribution on the set of all such trees by the formula
\begin{equation}
\nu (T)=Z^{-1}\prod_{v\in V(T)}w_{\sigma (v)},
\end{equation}
where $\sigma (v)$ is the degree of the vertex $v$ and $Z$ is a
normalization factor called partition function in physics.  One is
interested in typical properties of trees with respect to this
measure, especially asymptotics for large $N$ and the existence of a
limiting measure as $N\to\infty$.  

A lot is known about such trees for ``nice'' branching weights as we
review briefly below.  In this
paper we aim at complementing some of these results for weights $w_n$
which grow faster than exponentially with $n$.  In this case some of
the formalism that has been used to study simply generated trees is
not applicable any more as we will explain below.  A physicist 
would say that the Grand partition function is divergent which normally 
is a signal of instability in a
physical theory.  We will indeed see that with superexponential
branching weights one vertex becomes connected to all the other
vertices in the infinite volume limit.

In the next section we give a more technical background and summarize
our results.  The final section contains detailed proofs.

\section{Definitions and summary of results}

We consider rooted planar trees with root $r$ of degree 1.  We let
$\Gamma_N$ be the set of trees with $N$ edges and denote the set of finite
and infinite trees by $\Gamma$.  Vertices of infinite order are
allowed and for such vertices the links pointing away from the root
are ordered as $\bbN$, i.e.\ there is a leftmost edge pointing away
from the root.  The unique nearest neighbour of the root $r$ will be denoted by
$s$. 

\begin{remark}\label{Rroot}
  We include the root $r$ just for convenience. It is equivalent to omit it
  and consider $s$ as the root (now with arbitrary degree), with minor
  changes in the notation; $N$ is then the number of vertices in the tree
  and the degree $\gs(v)$ is replaced by $1+\gs_+(v)$ where $\gs_+(v)$ is
  the outdegree of $v$. It may be even more convenient to omit $r$ but keep
  the pendant edge from $s$ to $r$ as an edge with one free endpoint; this
  point of view is used 
  sometimes in the proofs below.
\end{remark}

\begin{remark}\label{RUlamHarris}
We can regard the set $\gG$ as a set of subtrees of the infinite Ulam--Harris
tree $T_\infty$, which is the tree with vertex set 
$V(T_\infty)=\set{r}\cup\bigcup_{k=0}^\infty\bbN^k$, the set of all finite
strings of natural numbers (and $r$), with $s=\emptyset$ (the empty string,
so $\bbN^0=\set s$) 
and a vertex $v=v_1\cdots v_k$ having
ancestor 
$v_1\cdots v_{k-1}$ when $k>0$.
More precisely, $\gG$ can be identified with the set of all rooted subtrees 
$T$ of
$T_\infty$ such that if $v=v_1\cdots v_k$ is a vertex in $T$, then so is
%$v_1\cdots v_{k-1}$ and 
$v_1\cdots v_{k-1}i$ for every $i<v_k$. We call such subtrees of $T_\infty$ {\it left subtrees} and more generally, we say that a tree $T' \in \Gamma$ is a left subtree of $T \in \Gamma$ if $V(T') \subseteq V(T)$.
\end{remark}

We endow $\Gamma$ with a metric $d$ which 
is defined as follows: Let $T\in \Gamma$ and define $B_R(T)$ as the graph ball of radius $R$, centered on the root $r$ in $T$. The {\it left ball} of radius $R$, $L_R(T)$, is defined as the maximal left subtree of $B_R(T)$ with vertices of degree no greater than $R$. The metric $d$ is given by
\begin {equation} \label{defmetric}
 d(T,T') = \inf\left\{ \frac{1}{R+1}~\Big|~ L_R(T) = L_R(T') \right\}, \quad T,T'\in \Gamma.
\end {equation}
Convergence in $\gG$, in the metric $d$, is equivalent to
convergence of the degree $\gs(v)$ for every $v\in V(T_\infty)$ (where we
define $\gs(v)=0$ for $v\notin T$), see \cite{sdf} for details.

To avoid trivialities we assume that the branching weights satisfy 
$w_1\neq0$ and $w_n\neq0$ for at least some $n>2$. 
We define the finite volume partition function
\begin {equation}
Z_N = \sum_{\tau\in\Gamma_N} ~\prod_{v\in V(\tau)\setminus\{r\}} w_{\sigma(v)}
\end {equation}
and a probability distribution $\nu_N$ on $\Gamma_N$ by
\begin {equation}
 \nu_N(\tau) = Z_N^{-1} \prod_{v\in V(\tau)\setminus\{r\}} w_{\sigma(v)}.
\end {equation}
This probability distribution describes a random tree $T_N$ with $N$ edges.

Let $\rho\ge0$ be the radius of convergence of 
the generating function 
\begin{equation}
g(z)=\sumn w_{n+1}z^n  
\end{equation}
of the branching weights.
A rescaling $w_n\mapsto ab^nw_n$ with $a,b>0$ does not affect the
distributions $\nu_N$, and it is well-known and easy to see that if
$\rho>0$, we can by rescaling assume that $(w_n)$ is a probability
distribution, i.e.\ $\sum_0^\infty w_n=1$. In that case, the random tree $T_N$
with distribution $\nu_N$ is a Galton--Watson tree with offspring
distribution $(w_{n+1})_{n=0}^\infty$, conditioned to have size $N$. If
further $\lim_{z\upto\rho} zg'(z)/g(z)\ge1$,
then the distributions $\nu_N$ converge to the distribution of a random tree
that is infinite, with all vertex degrees finite and exactly one infinite
path, see further \cite{durhuus,djw,LPP}.  The limiting measure
describes an infinite critical Galton--Watson tree conditioned on
nonextinction.
On the other hand, in the subcritical case when
$m=\lim_{z\upto\rho} zg'(z)/g(z)<1$, then (at least under some technical
conditions) the limit distribution still exists but now describes
a random tree with exactly one vertex of infinite degree; the 
length of the path from $r$ to this vertex has a geometric distribution with
mean $1/(1-m)$; the rest of the tree can be described by 
a subcritical Galton--Watson process, see \cite{sdf} for details.

In the present paper we are interested in the case when the radius of
convergence $\rho=0$. Note that then there is no Galton--Watson
interpretation.
We prove in Section \ref{Spf} weak convergence, as $N\rightarrow \infty$,
of the measures $\nu_N$ 
(in the topology generated by $d$) in this case too,
under certain conditions on the weights. 
The result can be seen as a natural limiting case of the result in
\cite{sdf} as $m\to0$; the resulting limit tree is in this case
non-random, and is simply an infinite star.

\begin {theorem}\label{th1}
 If the branching weights satisfy 
\begin {equation} \label{condition1}
 \frac{w_{n+1}}{w_n} \xrightarrow[n\rightarrow \infty]{} \infty
\end {equation}
then the measures $\nu_N$ viewed as probability measures on $\Gamma$,
converge weakly to the probability measure that is concentrated on the 
single tree which has $\sigma(s) = \infty$ and all other vertices of degree one.
\end {theorem}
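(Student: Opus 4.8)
The plan is to use the scaling invariance to normalise $w_1=1$ and then to show that under $\nu_N$ almost all of the degree is concentrated at the root's neighbour $s$, while every fixed child of $s$ is a leaf. Since convergence in $d$ is equivalent to convergence of all the degrees $\sigma(v)$, and since $L_R(T_*)$ is just $r$, $s$ and the first $R-1$ children of $s$ (leaves, with $s$ truncated to degree $R$), it suffices to prove, for $T_N\sim\nu_N$, that (A) $\sigma(s)\to\infty$ in probability and (B) for each fixed $j$ the $j$-th child $c_j$ of $s$ is a leaf with probability $\to 1$; these give $L_R(T_N)=L_R(T_*)$ with probability $\to1$ for every $R$, hence $d(T_N,T_*)\to0$ in probability and weak convergence to $\delta_{T_*}$. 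I will reduce both (A) and (B) to a single estimate on the deficit $D:=N-\sigma(s)$, namely $D/N\to0$ in probability. Claim (A) is then immediate, and (B) follows by an exchangeability argument: conditionally on the multiset of subtrees hanging from $s$ the planar order is uniform, so $\P(c_j\text{ is not a leaf}\mid \sigma(s)=k+1,\ k\ge j)\le(\text{number of non-leaf children})/k\le D/(\sigma(s)-1)$, and splitting on $\{D\le\epsilon N\}$ and letting $\epsilon\downarrow0$ closes the argument.

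For the deficit estimate I use the child-subtree decomposition: if $s$ has $k=\sigma(s)-1$ children carrying ordered subtrees $\tau_1,\dots,\tau_k$ then $\prod_{v\ne r}w_{\sigma(v)}=w_{k+1}\prod_j U(\tau_j)$, where $U(t)=\prod_{v\in t}w_{\sigma^+(v)+1}$ and $\sigma^+$ is the out-degree. Hence $U(z)=\sum_e u_e z^e$ satisfies $U=g(zU)$, the partition function is $Z_N=u_{N-1}$, and
\[
\P(\sigma(s)=k+1)=\frac{w_{k+1}\,[z^{N-1-k}]U(z)^k}{Z_N}.
\]
The single star $S_N$ gives the lower bound $Z_N\ge w_N$. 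Equivalently (Lagrange inversion), $Z_N=\tfrac1N[y^{N-1}]g(y)^N=\tfrac1N\sum_{n_1+\dots+n_N=N-1}\prod_i w_{n_i+1}$, a weighted sum over degree sequences in which the leading contribution is the star. The driving mechanism is that $w_{n+1}/w_n\to\infty$ makes any vertex of large degree other than $s$ extremely costly (each unit of degree moved away from the hub loses a factor $w_{m-1}/w_m\to0$), so the measure concentrates on trees whose excess degree $D$ is carried by a few short branches.

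Making this precise is the heart of the matter. One expects $D$ to be asymptotically Poisson with mean $\lambda_N\sim N w_2\,w_{N-1}/w_N$, and since $w_{N-1}/w_N\to0$ one has $\lambda_N=o(N)$, giving $D/N\to0$. Concretely, grouping degree sequences by the number $p$ of non-leaves, the entropy $\binom Np$ is balanced against the weight decay $w_{N-p+1}/w_N$, and the ratio of the $p$-term to the leading term behaves like $\lambda_N^{p-1}/(p-1)!$; summing, the mass on $\{D\ge\epsilon N\}$ is $o(1)$ because $\lambda_N/(\epsilon N)\to0$. Since $\log w_N=\sum_{i<N}\log(w_{i+1}/w_i)$ grows superlinearly, $\log Z_N=\log w_N+o(N)$, so $Z_N$ is itself superexponential and inherits the required concentration.

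The main obstacle is carrying out this concentration estimate rigorously, because $\rho=0$ means $U$ and $g$ have radius of convergence $0$ and no singularity analysis is available; every bound must be obtained by hand from the single hypothesis $w_{n+1}/w_n\to\infty$. The technical crux is a transfer principle: one must show that the partition sequence $(u_e)$ inherits from $(w_n)$ a one-big-jump convolution bound, so that the forest coefficients $[z^{M}]U(z)^k=\sum_{e_1+\dots+e_k=M}\prod_i u_{e_i}$ are dominated by the term with a single large part. I expect two delicate points: first, upgrading $w_{n+1}/w_n\to\infty$ to an (approximate) superadditivity $w_aw_b\le C\,w_{a+b-1}$ without assuming the ratios are monotone, so that finitely many small-index irregularities are absorbed into a constant that is negligible as $N\to\infty$; and second, confirming that the unique vertex of nearly maximal degree is indeed $s$ rather than a deeper vertex, which is what ties the symmetric degree-sequence count back to the decomposition at $s$.
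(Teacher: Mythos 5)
Your topological reduction and the exchangeability remark for the children of $s$ are fine, but the single estimate you reduce everything to --- $D/N=(N-\sigma(s))/N\pto 0$ --- does not follow from \eqref{condition1}; it is actually \emph{false} in this generality, so the ``transfer principle'' you flag at the end cannot be established. The approximate superadditivity $w_aw_b\le C\,w_{a+b-1}$ genuinely fails when the ratios $r_i=w_{i+1}/w_i$ tend to infinity non-monotonically: take $r_i=i+1$ except on sparse blocks $m<i\le 2m$ where $r_i=\lceil\sqrt m\rceil$. Then $r_i\to\infty$, but for $N=2m+1$ one has $w_{m+1}^2/w_{2m+1}=(m+1)!/\lceil\sqrt m\rceil^{m}\ge(\sqrt m/2e)^m$, so the $\binom N2$ degree sequences with two out-degrees equal to $m$ contribute $N^{(1-o(1))N}$ to $Z(N,N-1)$, while \emph{all} sequences with $\max_i d_i\ge(1-\epsilon)(N-1)$ together contribute at most $N^{(3/4+\epsilon)N(1+o(1))}$. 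Hence along this subsequence $\sigma(s)\approx N/2$ and $D/N\not\pto0$, even though Theorem \ref{th1} still holds there (the second hub is a \emph{single} child of $s$, so each fixed child is still a leaf whp; your bound $\P(c_j\text{ not a leaf})\le D/(\sigma(s)-1)$ is valid but uselessly lossy, since $D$ counts excess size rather than non-leaf children). Your Poisson heuristic $D\approx\Pois(Nw_2w_{N-1}/w_N)$ describes the special weights of Theorems \ref{th2}--\ref{th3}, not the general hypothesis; it only accounts for the excess degree being spread over many vertices of degree $2$ and misses a second macroscopic hub. Since both your (A) and (B) are derived from the false deficit claim, and the concentration step itself is only sketched (``one expects'', ``behaves like''), the proposal has a genuine gap that cannot be repaired along the stated route.

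For contrast, the paper proves only the two weak statements that are actually needed. First, $\sigma(s)\pto\infty$ (Lemma \ref{l:sdiverges}): the forest formula \eqref{dwassm} expresses $\nu_N(\sigma(s)=k+1)$ through $Z(N-1,N-k-1)/Z(N,N-1)$, and Lemma \ref{l1} --- an injective one-step shift that increments the maximal part of a degree sequence, costing a factor $\epsilon$ once the maximum exceeds $A_\epsilon$ --- controls this ratio using nothing beyond $w_i/w_{i+1}<\epsilon$ eventually; no monotonicity or superadditivity is required. Second, $\nu_N(\sigma(s_1)=1)\to1$ (Lemma \ref{l:s1is1}): the exact joint law \eqref{rssr} of $(\sigma(s),\sigma(s_1))$ is symmetric in the two degrees, so $\nu_N(\sigma(s_1)=\ell+1)\le\nu_N(\sigma(s)=\ell+1)\to0$ for fixed $\ell$, while the tail $\sigma(s_1)\ge L+1$ is handled by the Markov-type bound of Lemma \ref{l:2overL} built on the identity of Lemma \ref{lsum}. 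If you want to keep your exchangeability framework, you must replace the global deficit $D$ by the number of non-leaf children of $s$ and bound that directly --- which essentially forces you back to an argument like the paper's for $\P(\sigma(s_1)\ge2)$.
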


Furthermore, we obtain stronger convergence results for certain 
explicit choices of weights.  In the language of statistical mechanics
these results give an explicit description of the finite size effects.

\begin {theorem}\label{th2}
For the branching weights $w_2 = \lambda$ and $w_{n} = (n-1)!,~n\neq 2$, the
partition function satisfies
\begin {equation}\label{th2z}
 \frac{Z_N}{e^\lambda (N-1)!} \rightarrow 1
\end {equation}
and
\begin {equation} \label{poisson}
 N-\sigma(s)\xrightarrow[]{~d~} \Pois(\lambda)
\end {equation}
as $N\rightarrow \infty$.
Moreover, 
the tree $T_N$ consists of $r$,
$s$, and $\gs(s)-1$ branches attached to $s$;
with probability tending to $1$,  $N-\gs(s)$ of these branches have
size $2$ and all other have size $1$ (i.e., they contain a single leaf
only).  
\end {theorem}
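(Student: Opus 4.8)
The plan is to reduce all three assertions to the asymptotics of $Z_N$ together with a single estimate saying that ``large'' branches are negligible. First I would decompose a tree $\tau\in\Gamma_N$ according to the degree $d=\sigma(s)$ of the neighbour of the root. Deleting $s$ splits $\tau$ into the edge $rs$ and an ordered list of $d-1$ branches hanging from $s$; a branch occupying $p$ vertices contributes $p$ edges to $\tau$, so the branch sizes sum to $N-1$, and the weight of $\tau$ (with $r$ omitted) factors as $w_d$ times the product of the branch weights. Writing $\omega(\beta)=\prod_{v\in\beta}w_{1+\mathrm{outdeg}(v)}$ for the weight of a branch $\beta$ and $|\beta|$ for its size (its number of vertices, equal to the number of edges it contributes), this gives
\[
 Z_N=\sum_{d\ge 1} w_d\sum_{\substack{\beta_1,\dots,\beta_{d-1}\\ \sum_i|\beta_i|=N-1}}\ \prod_{i=1}^{d-1}\omega(\beta_i).
\]
Equivalently, if $B(x)=\sum_m b_m x^m$ is the branch generating function, then $B=x\phi(B)$ with $\phi(u)=\sum_{n\ge1}w_nu^{n-1}$, and since $\phi(B)=B/x$ one gets the compact identity $Z_N=b_N$; I would use whichever bookkeeping is cleaner.

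Next I would isolate the main term $A_N$ coming from trees all of whose branches have size $1$ (a single leaf, weight $w_1=1$) or size $2$ (weight $w_2w_1=\lambda$). If $j$ of the $d-1$ branches have size $2$ then $d=N-j$, the weight is $(N-j-1)!\,\lambda^j$, and there are $\binom{N-j-1}{j}$ planar arrangements, so
\[
 \frac{A_N}{(N-1)!}=\sum_{j\ge0}\frac{\lambda^j}{j!}\,a_{N,j},\qquad a_{N,j}=\prod_{i=1}^{j}\frac{N-j-i}{N-i}\in(0,1].
\]
Since $a_{N,j}\to1$ for each fixed $j$ and $a_{N,j}\le1$, dominated convergence gives $A_N/(N-1)!\to e^\lambda$. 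This is the clean half of the argument, and it already pins down the candidate limits $e^\lambda(N-1)!$ for $Z_N$ and $e^{-\lambda}\lambda^j/j!$ for $\P(N-\sigma(s)=j)$.

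The crux is the remainder $R_N:=Z_N-A_N\ge0$, the total weight of trees having at least one branch of size $\ge3$, which I must show is $o((N-1)!)$. The heuristic is that a size-$\ge3$ branch forces $\sigma(s)$ to drop, and since $w_{\sigma(s)}=(\sigma(s)-1)!$ this costs a factorial factor: the cheapest offending configuration (one size-$3$ branch, all others leaves) contributes $(N-3)(N-3)!\,b_3=O((N-1)!/N)$. To turn this into a bound over all configurations I would first prove an a priori estimate $b_m\le K(m-1)!$ for a fixed constant $K$ and all $m$, by strong induction through $B=x\phi(B)$. The essential point --- and the main obstacle --- is to treat the size-$2$ branches with their \emph{exact} weight $\lambda$ rather than bounding them by $K$: then the size-$\le2$ part of the recursion reproduces $A_m\le e^\lambda(m-1)!$, while every size-$\ge3$ branch carries excess $\ge2$ and hence an extra $O(1/m)$, so the inductive hypothesis enters only through a term $O_K((m-1)!/m)$ that cannot compound. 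Thus $b_m\le (e^\lambda+O_K(1/m))(m-1)!\le K(m-1)!$ for any $K>e^\lambda$ and $m$ large, the finitely many small $m$ being absorbed into $K$. Feeding this into a marked-branch bound
\[
 R_N\le\sum_{p\ge3} b_p\sum_{d}(d-1)\,w_d\,[x^{\,N-1-p}]\,B(x)^{\,d-2}
\]
and estimating each inner sum shows $R_N=O((N-1)!/N)=o((N-1)!)$. I expect the bookkeeping here --- keeping the $\lambda$-weighted size-$2$ branches exact while summing the geometric--factorial contributions of the larger ones --- to be the part needing the most care.

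Finally I would assemble the conclusions. From $Z_N=A_N+R_N$ one gets $Z_N/(e^\lambda(N-1)!)\to1$, which is \eqref{th2z}. Since $\nu_N(\exists\text{ a branch of size}\ge3)=R_N/Z_N\to0$, with probability tending to $1$ every branch of $T_N$ is a single leaf or a size-$2$ branch; on this event the edge count forces exactly $N-\sigma(s)$ branches of size $2$, which is the stated branch structure. For the Poisson limit, the contribution to $Z_N$ from $\sigma(s)=N-j$ equals $A_{N,j}+r_{N,j}$ with $A_{N,j}=\binom{N-j-1}{j}(N-j-1)!\,\lambda^j$ and $0\le r_{N,j}\le R_N$; since $A_{N,j}/(N-1)!\to\lambda^j/j!$ and $R_N/(N-1)!\to0$, we obtain $\P(N-\sigma(s)=j)\to e^{-\lambda}\lambda^j/j!$ for every $j$, and convergence of the probability mass functions of these integer-valued variables yields $N-\sigma(s)\dto\Pois(\lambda)$, which is \eqref{poisson}.
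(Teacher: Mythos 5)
Your decomposition at $s$, the computation of the main term $A_N$ (which reproduces the paper's lower bound \eqref{lowerbound}), and the final assembly deducing \eqref{th2z}, \eqref{poisson} and the branch structure from $Z_N=A_N+o((N-1)!)$ are all correct. The gap is in the one step you yourself flag: showing $R_N=o((N-1)!)$. Your marked-branch bound reduces this to estimating $[x^{n}]B(x)^{j}$, the partition function of an ordered forest of $j$ branches of total size $n$ --- an object of exactly the same nature as $Z_N$ itself, so nothing has been gained at that point. The a priori estimate $b_m\le K(m-1)!$ on single branches does not control it: using $\prod_i b_{n_i}\le K^{j}\prod_i(n_i-1)!\le K^{j}(n-j)!$ and summing over the $\binom{n-1}{j-1}$ compositions gives only $[x^n]B^j\le K^{j}(n-1)!/(j-1)!$, whence the inner sum $\sum_d(d-1)w_d\,[x^{N-1-p}]B^{d-2}$ is bounded by $\sum_d (d-1)^2(d-2)K^{d-2}(N-p-2)!$, of order $K^{N-p}(N-p+1)!$ --- vastly larger than the target $(N-1)!/N$. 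What is missing is a concentration statement saying that the factorial weights force essentially all of the outdegree in a forest onto a single vertex, \emph{uniformly in the number of components}. That is exactly what the paper's Lemma \ref{l1} (the injection $(d_i)\mapsto(d_i^{\ast})$ raising the maximal degree by one) supplies, and what the paper's upper-bound argument uses in the guise of the shift $m_{M-1}\mapsto m_{M-1}+1$ in the degree-multiplicity representation \eqref{ZNfactorial}.

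The same difficulty already undermines your inductive proof of $b_m\le K(m-1)!$: in the inductive step, the configurations containing a sub-branch of size $\ge3$ include those with several simultaneously large sub-branches, and to see that their total contribution is $O_K((m-1)!/m)$ with a constant that does not compound you again need a bound on $\sum_{p_1,\dots,p_l\ge 3,\ \sum p_l=Q}\prod_l(p_l-1)!$ that is uniform in $l$ --- i.e.\ the same concentration lemma. So the induction is not self-contained as sketched. The approach is salvageable: either import Lemma \ref{l1} (which gives $Z(N-1,N-1-k)\le\epsilon^kZ(N-1,N-1)+kC_\epsilon^{N}$ and hence the required forest bounds), or prove by the same injection idea that $\sum\prod_l(p_l-1)!\le C\,(Q-1)!$ uniformly over compositions of $Q$ into parts $\ge3$. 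Without such an ingredient, ``estimating each inner sum'' is not a routine computation but is precisely the substantive content of the theorem's upper bound.
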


Note that in the limit $N\to\infty$, the branches of size 2 disappear to
infinity, so we do not see them in the limit given by Theorem \ref{th1}.

\begin {theorem} \label{th3}
Let the branching weights be $w_{n} = \left((n-1)!\right)^\alpha$, where
$0<\alpha < 1$.
Then the partition function satisfies 
\begin {equation}\label{th3z}
 {Z_N}=\bigpar{(N-1)!}^\alpha\exp\bigpar{O(N^{1-\ga})}
=\exp\bigpar{\ga N\log(N)-\ga N+O(N^{1-\ga})}.
\end {equation}
Furthermore, with probability tending to $1$,
the random tree $T_N$ has the following properties, with $K=\floor{1/\ga}$:
\begin{romenumerate}
  \item\label{th3s}
$\gs(s)=N-O(N^{1-\ga})$.
  \item\label{th3K}
All vertices except $s$ have degrees $\le K+1$.
  \item\label{th3trees}
All subtrees attached to $s$ have sizes $\le K+1$.
\end{romenumerate}
Moreover,
let\/ $X_{i,N}$ be the number of vertices of degree $i$ in $T_N$
and let 
\begin{equation}\label{ni}
n_i=i!^\ga N^{1-i\ga}.
\end{equation}
%Then 
\begin{romenumerateq}
  \item\label{th3m}
If\/ $1\le i<1/\ga$, then $n_i\to\infty$ as $N\to\infty$ and
\begin{equation}\label{th3pto}
\frac{ X_{i+1,N}}{n_i}\pto 1.  
\end{equation}
If\/ $i=1/\ga=K$ (which occurs only when $1/\ga$ is an integer), then
$n_K=K!^\ga$ is constant and
\begin{equation}\label{th3pois}
X_{K+1,N}\dto \Pois(n_K).
\end{equation}
\end{romenumerateq}
\end {theorem}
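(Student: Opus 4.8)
The plan is to decompose each tree at the exceptional vertex $s$ and reduce all the assertions to the asymptotics of a single coefficient. Following Remark~\ref{Rroot} I discard $r$ and view $s$ as the root of arbitrary degree, so that a tree is exactly $s$ together with an ordered list of $\sigma(s)-1$ branches. Let $b_k$ be the total weight of all branches with $k$ edges (the root of a branch having its edge to $s$ counted in its degree) and put $B(z)=\sum_{k\ge1}b_kz^k$; a composition count then gives the exact identity
\begin{equation}\label{pp-Z}
Z_N=[z^{N-1}]\,g\bigpar{B(z)},\qquad g(x)=\sum_{m\ge0}(m!)^\alpha x^m .
\end{equation}
The feature that makes the standard theory inapplicable is that \emph{both} $g$ and $B$ have radius of convergence $0$: already the star branch of size $k$ (root of degree $k$, all else leaves) contributes $((k-1)!)^\alpha$ to $b_k$. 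I would therefore avoid contour integration and singularity analysis entirely, and estimate the finite coefficient sums \eqref{pp-Z} directly, the guiding principle being that they are dominated by a handful of explicit ``star'' configurations.

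\emph{Branches are small.} Replacing a single branch of size $k$ by $k$ leaves raises $\sigma(s)$ by $k-1$ and thus multiplies the $s$-weight $((\sigma(s)-1)!)^\alpha$ by a factor of order $N^{(k-1)\alpha}$; against this, the $\asymp N$ available insertion slots contribute a factor $N$ and the branch itself a factor $b_k\asymp((k-1)!)^\alpha$. Balancing these shows that the expected number of branches of size $k$ is of order $b_kN^{1-(k-1)\alpha}$, which tends to $0$ precisely when $k\ge K+2$. This proves \ref{th3K} and \ref{th3trees} at once and licenses the truncation of $B$ to the polynomial $\sum_{k=1}^{K+1}b_kz^k$. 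The same bookkeeping shows that the excess $\sum(\text{size}-1)=N-\sigma(s)$ is carried mostly by size-$2$ branches and is of order $N^{1-\alpha}$, which is \ref{th3s}.

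\emph{Partition function.} With branch sizes bounded, \eqref{pp-Z} is an explicit Laplace-type sum. Retaining only leaves and size-$2$ branches already yields the correct order: a configuration with $t$ size-$2$ branches has weight $\binom{N-t-1}{t}\bigpar{(N-t-1)!}^\alpha\approx\bigpar{(N-1)!}^\alpha N^{t(1-\alpha)}/t!$, and summing over $t$ gives $\bigpar{(N-1)!}^\alpha\exp\bigpar{\Theta(N^{1-\alpha})}$ with maximiser $t^*\asymp N^{1-\alpha}$. The remaining admissible sizes $3\le k\le K+1$ only adjust the constant in the exponent, each contributing an exponent of order $N^{1-(k-1)\alpha}=o(N^{1-\alpha})$; together with the trivial lower bound $Z_N\ge\bigpar{(N-1)!}^\alpha$ from the single star this establishes \eqref{th3z}.

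\emph{Degree profile.} Marking each degree-$(i+1)$ vertex (necessarily in a branch, since $\sigma(s)\to\infty$) by a variable $u$ gives $\mathbb E\,u^{X_{i+1,N}}=[z^{N-1}]g\bigpar{B(z;u)}/Z_N$. The unique size-$(i+1)$ branch carrying a degree-$(i+1)$ vertex is the star, of weight $(i!)^\alpha$, so $\partial_uB(z;u)\big|_{u=1}=(i!)^\alpha z^{i+1}+O(z^{i+2})$; combining this with $[z^M]g'(B)\approx M^{1+\alpha}[z^M]g(B)$ and the ratio $[z^{M-\ell}]g(B)/[z^M]g(B)\approx N^{-\ell\alpha}$ (where the entropic factor $\exp(\Theta(M^{1-\alpha}))$ cancels and only the factorial prefactor survives) gives $\mathbb E[X_{i+1,N}]=(i!)^\alpha N^{1+\alpha}N^{-(i+1)\alpha}\bigpar{1+o(1)}=n_i\bigpar{1+o(1)}$. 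The limit law follows from the factorial moments: $\mathbb E\bigsqpar{X_{i+1,N}(X_{i+1,N}-1)\cdots(X_{i+1,N}-\kappa+1)}$ counts $\kappa$-tuples of disjoint star branches and factorises to $n_i^{\kappa}\bigpar{1+o(1)}$, since distinct branches are coupled only weakly, through $\sigma(s)$ and the total size. For $1\le i<1/\alpha$ the relative fluctuations then vanish, giving \eqref{th3pto}, while for $i=1/\alpha=K$ the mean $n_K=(K!)^\alpha$ is constant and convergence of all factorial moments to $n_K^\kappa$ gives \eqref{th3pois}. Throughout, the governing obstacle is the vanishing radius of convergence, which rules out bounding coefficients by values of $g$ or $B$ on any disc; the two genuinely delicate estimates are that the truncation to sizes $\le K+1$ and the subdominant configurations are negligible uniformly in $N$ (both in $Z_N$ and in the factorial moments), and that the higher-order corrections to $[z^j]B(z)^M\approx M^j/j!$, of relative size $\asymp j^2/M\asymp N^{1-2\alpha}$ at the saddle $j\asymp N^{1-\alpha}$, are controlled — these cease to be small once $\alpha<\tfrac12$, which is precisely the regime $K\ge2$ where the larger star branches must be retained.
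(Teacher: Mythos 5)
Your decomposition at $s$ and the identity $Z_N=[z^{N-1}]g(B(z))$ are sound as formal power series statements, and the factorial-moment route to \eqref{th3pto} and \eqref{th3pois} would work if the coefficient asymptotics you invoke were available. But two of the steps you rely on contain genuine gaps, and they are precisely the points where the paper's argument does something different. First, your ``branches are small'' switching argument is circular: the gain factor $N^{(k-1)\alpha}$ obtained by replacing a size-$k$ branch with $k$ leaves comes from the ratio $\bigpar{(\sigma(s)+k-2)!/(\sigma(s)-1)!}^\alpha$, which is only of order $N^{(k-1)\alpha}$ if $\sigma(s)\asymp N$ --- the very fact you are trying to establish. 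Your argument does not rule out configurations in which the dominant vertex sits deep inside a branch (e.g.\ $s$ of degree $2$ with a single branch containing a vertex of degree $N-1$), nor does it control $\sum_{k\ge K+2}b_kN^{1-(k-1)\alpha}$, since for large $k$ the branch weight $b_k$ is itself a full partition function, not merely the star weight $((k-1)!)^\alpha$. The paper avoids all of this by passing through the Dwass/Lagrange identity $Z_N=\frac1N\sum_{d_1+\cdots+d_N=N-1}\prod_i(d_i!)^\alpha$, which forgets the tree structure entirely; the elementary merging inequality $N!\,M!\le(N+1)!\,(M-1)!$ then lets one push mass onto a single large $d_i$ regardless of where that vertex sits, showing first that exactly one degree exceeds $\eps(N-1)$, and only afterwards (via Lemma \ref{l:sdiverges}) that this vertex is $s$.

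Second, your first-moment computation needs $[z^{M-\ell}]g(B)/[z^M]g(B)=N^{-\ell\alpha}(1+o(1))$, i.e.\ that the subexponential factor $\exp\bigpar{\Theta(N^{1-\alpha})}$ in $Z_N$ cancels exactly between nearby indices. You assert this cancellation, but it does not follow from the estimate \eqref{th3z} you will have proved at that stage, which only determines the correction up to $\exp\bigpar{O(N^{1-\alpha})}$ and says nothing about its increments $h(N)-h(N-\ell)$; establishing $Z_{N-1}/Z_N\sim N^{-\alpha}$ with the correct constant requires an additional injection or smoothness argument that you have not supplied. The paper sidesteps this entirely: all comparisons are made between adjacent terms $b(m_i+1)/b(m_i)$ of a single fixed-$N$ sum (equation \eqref{isaac}), where the entropic factors are literally shared and the ratio $\approx n_i/(m_i+1)$ falls out of the explicit summand; this yields concentration of $X_{i+1,N}$ and the Poisson limit at $i=1/\alpha$ without ever comparing partition functions at different $N$. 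Your plan is salvageable --- the leaf-insertion injection at $s$ can be bootstrapped to give the needed ratio, and the switching argument can be made global by iterating it down to the star --- but as written these are the missing ideas, not routine details.
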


With these branching weights,
the asymptotic distributions of the numbers $X_{i,N}$ of vertices of
different degrees are Gaussian, except in the Poisson case when
\eqref{th3pois} applies.

\begin{theorem}
  \label{th4}
Let
$w_{n} = \left((n-1)!\right)^\alpha$ with $0<\ga<1$ as in Theorem \ref{th3}. Then
there exist numbers $\nx_i=\nx_i(N)=\etto n_i$,
$1\le i<1/\ga$,
with $n_i$ given by \eqref{ni}, such that, as
\Ntoo,
\begin{align}\label{th4gauss}
  \frac{X_{i+1,N}-\nx_i}{\sqrt{n_i}}&\dto \cN(0,1), && 1\le i<1/\ga,
\\
X_{i+1,N}&\dto \Pois(n_i), && i=K=1/\ga.
\end{align}
Moreover, these hold jointly for all $i\le K$, with independent limits.

More precisely, for each $i<1/\ga$,
\begin{equation}\label{thx1}
  \nx_i=n_i\bigpar{1-(1-i\ga)N^{-\ga}+O\bigpar{N^{-2\ga}}}+O(1).
\end{equation}

In particular,  when $\ga$ is not too small, we have the explicit limits
  \begin{align}\label{thx2a}
 \frac{X_{2,N}-N^{1-\ga}}{N^{(1-\ga)/2}}&\dto \cN(0,1), 
&& 1>\ga>\tfrac13,
\\
\frac{X_{2,N}-(N^{1-\ga}-(1-\ga)N^{1-2\ga})}{N^{(1-\ga)/2}}&\dto \cN(0,1), 
&& 1>\ga>\tfrac15,	
\\
 \frac{X_{3,N}-2^\ga N^{1-2\ga}}{N^{(1-2\ga)/2}}&\dto \cN(0,2^\ga), 
&& \tfrac12>\ga>\tfrac14,
\\
 \frac{X_{3,N}-(2^\ga N^{1-2\ga}-(1-2\ga)2^\ga N^{1-3\ga})}{N^{(1-2\ga)/2}}&\dto \cN(0,2^\ga), 
&& \tfrac12>\ga>\tfrac16,	%\label{thx3b}
\\
 \frac{X_{4,N}-6^\ga N^{1-3\ga}}{N^{(1-3\ga)/2}}&\dto \cN(0,6^\ga), 
&& \tfrac13>\ga>\tfrac15,  % \label{thx4a}
\\
\frac{X_{4,N} - (6^\alpha N^{1-3\alpha} - (1-3\alpha) 6^\alpha N^{1-4\alpha})}
  {N^{(1-3\alpha)/{2}}}&\dto \cN(0,6^\ga),  
&&\tfrac13>\ga>\tfrac17.  \label{thx4b}
  \end{align}
\end{theorem}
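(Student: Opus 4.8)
The plan is to exploit the decomposition of $T_N$ into the branches hanging off the vertex $s$, and to analyze the resulting ``balls-in-boxes'' structure by a Poissonization argument. Writing $m=\gs(s)-1$ for the number of children of $s$ and listing the branches (the subtrees rooted at the children of $s$) as an ordered sequence $B_1,\dots,B_m$ with $\sum_l\abs{B_l}=N-1$, the weight of a tree factorizes as $w_{\gs(s)}\prod_l\omega(B_l)=(m!)^\ga\prod_l\omega(B_l)$, where $\omega(B)=\prod_{v\in B}w_{\gs(v)}$. Hence, with $f(z)=\sum_B\omega(B)z^{\abs B}=\sum_t\omega_t z^{s_t}$ the generating function of a single branch (the sum ranging over branch types $t$, of size $s_t=\abs B$ and weight $\omega_t=\omega(B)$), one has $Z_N=\sum_m (m!)^\ga[z^{N-1}]f(z)^m$, and the joint probability generating function of $(X_{i+1,N})_{i\ge1}$ is $Z_N(\mathbf u)/Z_N(\mathbf 1)$, where $Z_N(\mathbf u)$ is obtained by marking each out-degree-$i$ vertex inside the branches with $u_i$. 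By Theorem~\ref{th3} we may restrict, with probability tending to $1$, to branches of size at most $K+1$, so that only finitely many branch types $t$ occur; I would fix this finite family once and for all.

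First I would rewrite the partition function as a sum over branch-type profiles $(y_t)_t$, where $y_t$ is the number of branches of type $t$. Collecting the multinomial factor that counts the orderings of the branches, this gives $Z_N=\sum_{(y_t):\sum_t s_t y_t=N-1}(m!)^{1+\ga}\bigpar{\prod_t y_t!}^{-1}\prod_t\omega_t^{y_t}$ with $m=\sum_t y_t$. Isolating the leaves (the single-vertex branches, of which there are $y_{\mathrm{leaf}}=N-1-\sum_{t\ne\mathrm{leaf}}s_t y_t$) and setting the excess $j=N-1-m=\sum_{t\ne\mathrm{leaf}}(s_t-1)y_t$, I would use Stirling's formula in the form $(m!)^\ga=\bigpar{(N-1)!}^\ga N^{-\ga j}\exp\bigpar{\ga j^2/(2N)+\dots}$ together with $m!/y_{\mathrm{leaf}}!=N^{\sum_{t\ne\mathrm{leaf}}y_t}\etto$ to obtain
\begin{equation*}
Z_N\approx\bigpar{(N-1)!}^\ga\sum_{(y_t)_{t\ne\mathrm{leaf}}}e^{\ga j^2/(2N)}\prod_{t\ne\mathrm{leaf}}\frac{\mu_t^{y_t}}{y_t!},\qquad \mu_t=\omega_t N^{1-\ga(s_t-1)}.
\end{equation*}
Thus, up to the quadratic coupling factor $e^{\ga j^2/(2N)}$, the counts $(Y_t)_{t\ne\mathrm{leaf}}$ are asymptotically independent Poisson variables with means $\mu_t$. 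The ``star'' branch consisting of one vertex of out-degree $i$ together with $i$ leaves has $s_t=i+1$ and $\omega_t=(i!)^\ga$, hence mean $\mu_t=(i!)^\ga N^{1-i\ga}=n_i$; every other branch type containing an out-degree-$i$ vertex has strictly larger excess and therefore mean smaller by a factor $O(N^{-\ga})$.

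Since $X_{i+1,N}=\sum_t a_{i,t}Y_t$, where $a_{i,t}$ is the number of out-degree-$i$ vertices of type $t$, the dominant contribution is the star term $Y_{\mathrm{star}}\sim\Pois(n_i)$. For $i=K=1/\ga$ the mean $n_K$ stays bounded, and I would conclude $X_{K+1,N}\dto\Pois(n_K)$ directly (e.g.\ by the method of factorial moments, which are clean here). For $1\le i<1/\ga$ we have $n_i\to\infty$, and a standardized Poisson variable with diverging parameter is asymptotically $\cN(0,1)$; checking that the quadratic coupling shifts the mean but leaves the variance asymptotic to $n_i$ and the higher cumulants negligible then yields \eqref{th4gauss}, while the product structure over branch types gives the asserted joint independence. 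Establishing these limit laws rigorously---i.e.\ proving the Poisson approximation uniformly over the relevant range $j\sim N^{1-\ga}$ (and over $\mathbf u$ in the complex neighbourhood needed to run the characteristic-function argument at the CLT scale $u_i-1\sim n_i^{-1/2}$) despite $f$ having radius of convergence $0$---is where the work lies; the truncation to finitely many branch types supplied by Theorem~\ref{th3} is what makes this feasible.

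The hardest part is the second-order mean \eqref{thx1}. The naive leading value $n_i$ must be corrected at relative order $N^{-\ga}$ by three competing effects of the same size: the contribution of out-degree-$i$ vertices sitting inside larger (non-star) branches, the shift produced by the quadratic factor $e^{\ga j^2/(2N)}$, and the Stirling and normalization corrections coming from $m=N-1-j$ rather than $N$. Carrying all of these to order $N^{-2\ga}$ and summing should produce the stated $\nx_i=n_i\bigpar{1-(1-i\ga)N^{-\ga}+O(N^{-2\ga})}+O(1)$. Finally, \eqref{thx2a}--\eqref{thx4b} follow by specializing $i\in\set{1,2,3}$ and noting that the $O(N^{-2\ga})$ correction to $\nx_i$ is $o\bigpar{\sqrt{n_i}}$ exactly in the stated ranges of $\ga$, so that it may be dropped from (or, in the wider ranges, must be retained in) the centering.
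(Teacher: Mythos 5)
Your proposal is correct in outline but takes a genuinely different route from the paper. The paper does not decompose $T_N$ into branches at $s$ here: it works with the degree-profile representation \eqref{isaac} already set up in the proof of Theorem~\ref{th3} via the Lagrange-inversion identity \eqref{partitionfunction}, so that the summation variables are directly $m_i=X_{i+1,N}$. Stirling's formula turns the summand into $\exp\bigpar{f(m_1,\dots,m_K)}$ for a smooth $f$, and a Laplace-method argument --- locate the maximizer $\nnxx$ from $\partial f/\partial m_i=0$, Taylor-expand to second order --- yields the product-Gaussian form \eqref{magn}, from which \eqref{th4gauss}, the joint independence, the Poisson case, and Remark~\ref{Rth4} all follow at once, while \eqref{thx1} drops out of refining the single stationarity equation \eqref{f'0}--\eqref{erika}. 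Your Poissonization over branch types arrives at the same ``independent Poissons plus a quadratic coupling'' structure, and your accounting for \eqref{thx1} does close: for general $i$ the three effects contribute $+\ga i$ (from the tilt $e^{\ga j^2/(2N)}$), $-(i+2)$ (from the $m!/y_{\mathrm{leaf}}!$ normalization), and $+(i+1)$ (from the $i+1$ branch types of size $i+2$ carrying an out-degree-$i$ vertex, each with mean $i!^\ga N^{1-(i+1)\ga}$), summing to $-(1-i\ga)$ as required. What the paper's parametrization buys is that $X_{i+1,N}$ is a coordinate of the sum rather than a linear functional $\sum_t a_{i,t}Y_t$ over many type counts, so both the CLT and the second-order mean come from one computation; what yours buys is a more transparent structural picture of which branches produce each correction, at the cost of a bookkeeping exercise over branch types and a separate justification of the characteristic-function step.

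Two caveats. First, your displayed formula for $Z_N$ is not literally correct when $\ga<\tfrac12$: the factor you write as $m!/y_{\mathrm{leaf}}!=N^{\sum_{t\ne\mathrm{leaf}}y_t}\etto$ actually carries an extra $\exp\bigpar{-(j+Y/2)Y/N+\cdots}$ with $Y=\sum_{t\ne\mathrm{leaf}}y_t$, and $(j+Y/2)Y/N\asymp N^{1-2\ga}\to\infty$ in that range; you do list this effect later among the corrections to the mean, but taken at face value the display would give the wrong $\nx_i$. Second, the two steps you defer --- the uniform Poisson/Gaussian approximation under the quadratic tilt at the scale $u_i-1\sim n_i^{-1/2}$, and carrying the mean to relative order $N^{-2\ga}$ --- constitute essentially all of the content of this theorem beyond Theorem~\ref{th3}; neither step fails, but neither is carried out, so what you have is a sound plan rather than a complete proof.
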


For smaller $\ga$, it is possible to obtain further terms in the expansion
of $\nx_i$, and thus explicit forms of the asymptotic mean of $X_{i+1,N}$.
However, this
approach seems to become more and more difficult as $\ga$ becomes smaller.

\begin{remark}\label{Rth4}
  The proof of \eqref{th4gauss}
shows also the stronger result that
the joint distribution of $(X_{i+1,N})_{i=1}^K$ can be approximated by 
the joint distribution of independent Poisson random variables
$Y_{i,N}\sim\Pois(\nx_i)$, in the sense that the 
total variation distance tends to 0 as $N\to\infty$:
\begin{equation}
\frac12 \sum_{m_1,\dots,m_K}\Bigabs{\P(X_{i+1,N}=m_i, \,\forall i) 
- \P(Y_{i,N}=m_i, \,\forall i) }
\to0.
\end{equation}
\end{remark}

\begin{remark}\label{Rz}
  The estimate \eqref{th3z} of the partition function can be improved to
\begin{equation}\label{rz}
 Z_N={(N-1)!^\ga}
 \exp\lrpar{N^{1-\ga}+\Bigpar{2^\ga-\frac{1-\alpha}2} N^{1-2\ga}
+ O\bigpar{N^{1-3\ga}}
+o(1)}.
\end{equation}
In particular, if $1>\ga>\frac12$, then 
$
Z_N={(N-1)!^\ga} \exp\bigpar{N^{1-\ga}+o(1)}$.

Again it seems possible, but more complicated, to obtain further terms in the
exponent. 
\end{remark}

\begin {remark} \label{Rag1}
 It is straightforward to show, using the same methods as in the proof of 
 Theorem \ref{th2}, that when $\alpha > 1$
\begin {equation}
 Z_N = (N-1)!^\alpha (1+o(1))
\end {equation}
and all the branches which are attached to $s$ have size 1, with a 
probability which tends to 1 as $N\rightarrow\infty$. In this case the 
leading contribution to the partition function comes only from the
Boltzmann factor 
of the vertex $s$, i.e.~$w_{\sigma(s)}$. The case $\alpha =1$ is a 
marginal case when larger branches start to appear and their entropy 
adds a contribution to the partition function which appears in the 
associated exponential.
\end {remark}

\section{Proofs of theorems}\label{Spf}
In this section we state and prove a few lemmas and prove 
Theorems \ref{th1}--\ref{th3}. In the following we will always assume that the branching weights satisfy the condition in Equation (\ref{condition1}). Define
\begin {equation}
 Z(N,n) = \sum_{d_1+\cdots+d_N = n} \prod_{i=1}^N w_{d_i+1}.
\end {equation}
By Lagrange's inversion formula \cite{sdf,flajolet} 
(or by a combinatorial argument, see  \cite{Dwass,Kolchin,Pitman:enum}),
it holds that
\begin {equation} \label{partitionfunction}
 Z_N = \frac{1}{N} Z(N,N-1).
\end {equation}
More generally the partition function for an ordered forest of $m$ trees with a total number of edges $N$ is
\begin {equation}\label{dwassm}
 Z_N^{(m)} = \frac{m}{N} Z(N,N-m).
\end {equation}

\begin {lemma} \label{l1}
 For every $\epsilon >0$ there exists a $C_\epsilon < \infty$ such that for all $N$ and $n$
\begin {equation}
 Z(N,n) \leq \epsilon Z(N,n+1) + C_\epsilon ^N.
\end {equation}
\end {lemma}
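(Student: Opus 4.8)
The plan is to exploit the superexponential growth \eqref{condition1} by building an almost weight-increasing \emph{injection} from the configurations contributing to $Z(N,n)$ into those contributing to $Z(N,n+1)$, after first stripping off a ``small-degree'' part whose total weight is controlled by a term of the form $C_\epsilon^N$. Throughout I index the terms of $Z(N,n)$ by $\mathbf{d}=(d_1,\dots,d_N)$ with $\sum_i d_i=n$ and write $W(\mathbf{d})=\prod_{i=1}^N w_{d_i+1}$ for the associated weight.

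First I would fix $\epsilon>0$ and use \eqref{condition1} to choose an integer $M$ with $w_{m+1}\geq \epsilon^{-1}w_m$ for every $m\geq M$. I then split the sum defining $Z(N,n)$ according to the largest outdegree: let $A$ be the set of $\mathbf{d}$ with $\max_i d_i\geq M-1$, and $B$ the complement, so that every coordinate of $\mathbf{d}\in B$ satisfies $d_i\leq M-2$. For $\mathbf{d}\in B$ each factor obeys $w_{d_i+1}\leq W:=\max_{1\leq k\leq M-1}w_k$, and there are at most $(M-1)^N$ such configurations; hence $\sum_{\mathbf{d}\in B}W(\mathbf{d})\leq\bigpar{(M-1)W}^N=:C_\epsilon^N$. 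This disposes of the additive error.

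The heart of the argument is the treatment of $A$. For $\mathbf{d}\in A$ let $j=j(\mathbf{d})$ be the smallest index attaining $\max_i d_i$, and define $\Phi(\mathbf{d})=\mathbf{d}+e_j$, where $e_j$ is the $j$th standard basis vector; this is a configuration with coordinate sum $n+1$. Since $d_j+1\geq M$, the choice of $M$ gives $w_{d_j+2}\geq\epsilon^{-1}w_{d_j+1}$, and therefore $W(\mathbf{d})\leq\epsilon\,W(\Phi(\mathbf{d}))$. The key point, and the step I expect to require the most care, is that $\Phi$ is injective: after incrementing coordinate $j$, which was a maximum of $\mathbf{d}$, that coordinate becomes the \emph{unique strict} maximum of $\Phi(\mathbf{d})$ (it was $\geq$ all others and is now strictly larger), so $j$ — and hence $\mathbf{d}=\Phi(\mathbf{d})-e_j$ — is recovered from $\Phi(\mathbf{d})$ as its unique argmax. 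Granting injectivity and nonnegativity of all weights, I would sum over $A$:
\[
\sum_{\mathbf{d}\in A}W(\mathbf{d})
\leq\epsilon\sum_{\mathbf{d}\in A}W(\Phi(\mathbf{d}))
\leq\epsilon\sum_{\mathbf{d}':\,\sum_i d'_i=n+1}W(\mathbf{d}')
=\epsilon\,Z(N,n+1),
\]
the middle inequality using that the images $\Phi(\mathbf{d})$ are distinct configurations summing to $n+1$. Adding the two parts gives $Z(N,n)\leq\epsilon Z(N,n+1)+C_\epsilon^N$ with $C_\epsilon=(M-1)W$ depending only on $\epsilon$, uniformly in $N$ and $n$, as required.

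Two routine points remain to be verified: that \eqref{condition1} forces $w_m>0$ for all large $m$, so the chosen $M$ is meaningful and the terms with $d_j+1\geq M$ carry positive weight; and that the off-by-one threshold $M-1$ in the definition of $A$ is exactly what is needed to guarantee $d_j+1\geq M$ before applying $w_{m+1}\geq\epsilon^{-1}w_m$. Neither is a genuine obstacle. The only real idea is the argmax-incrementing injection, which is precisely what converts the superexponential ratio hypothesis into the claimed recursive inequality.
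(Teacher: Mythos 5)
Your proof is correct and follows essentially the same route as the paper's: the same injection that increments the first coordinate attaining the maximum (recoverable as the unique strict maximum of the image), the same split according to whether the maximal outdegree exceeds an $\epsilon$-dependent threshold, and the same crude $C_\epsilon^N$ bound on the small-degree part (the paper takes $C_\epsilon=\sum_{i=0}^{A_\epsilon}w_{i+1}$ rather than your count-times-max bound, an immaterial difference).
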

\begin {proof}
Consider a finite sequence $d_1,\ldots,d_N$ for which $\sum_{i} d_i = n$. Let $i^\ast$ be the smallest index such that $d_{i^{\ast}} = \max_i{d_i}$. Define a sequence
\begin {equation}
d_i^{\ast} = \left\{\begin {array}{ll} 
d_i + 1 & \text{if $i = i^\ast$}, \\
d_i & \text{otherwise.}
\end {array}\right.
\end {equation}
 Note that $d_{i^{\ast}}^\ast$ is the unique maximum in $(d_i^{\ast})$, so
 $(d_i)$ can be recovered from $(d_i^{\ast})$ 
and the map $(d_i)\mapsto(d_i^*)$ is injective.

Let $\epsilon > 0$ be given. Choose a number $A_\epsilon$ such that $w_i/w_{i+1} < \epsilon$ if \\ $i \geq A_\epsilon$. Then
\begin {equation}
 \sum_{\substack{d_1 + \cdots + d_N = n \\ \max_i d_i > A_\epsilon }} \prod_{i=1}^{N} w_{d_i+1} \leq \epsilon \sum_{\substack{d_1^{\ast} + \cdots + d_N^{\ast} = n+1}} \prod_{i=1}^{N} w_{d_i^{\ast}+1} \leq \epsilon Z(N,n+1)
\end {equation}
and, crudely,
\begin {equation}
\sum_{\substack{d_1 + \cdots + d_N = n \\ \max_i d_i \leq A_\epsilon }} \prod_{i=1}^{N} w_{d_i+1} \leq \left(\sum_{i=0}^{A_\epsilon} w_{i+1}\right)^N.
\end {equation}
Taking $C_\epsilon = \sum_{i=0}^{A_\epsilon} w_{i+1}$ completes the proof.
\end {proof}

\begin {lemma} \label{l:sdiverges}
As  $N\rightarrow \infty$, $\sigma(s) \xrightarrow[]{p} \infty$.
\end {lemma}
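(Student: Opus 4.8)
The plan is to show that $\nu_N(\sigma(s)\le k)\to 0$ for every fixed integer $k$, which is exactly what $\sigma(s)\pto\infty$ means. Since for $N\ge 2$ the value $\sigma(s)=1$ is impossible (it would force the whole tree to be the single edge $r$--$s$), the event $\{\sigma(s)\le k\}$ is the finite union $\bigcup_{m=1}^{k-1}\{\sigma(s)=m+1\}$, so it suffices to prove $\nu_N(\sigma(s)=m+1)\to 0$ for each fixed $m\ge1$. First I would write this probability exactly. On the event $\sigma(s)=m+1$ the vertex $s$ contributes the weight $w_{m+1}$ and has exactly $m$ children, below which sits an ordered forest of $m$ rooted trees on the remaining $N-1$ vertices; by the Dwass-type identity \eqref{dwassm} the total weight of such forests is $\frac{m}{N-1}Z(N-1,N-1-m)$, so together with \eqref{partitionfunction} for $Z_N$,
\[
\nu_N(\sigma(s)=m+1)=\frac{Nm}{N-1}\,w_{m+1}\,\frac{Z(N-1,N-1-m)}{Z(N,N-1)} .
\]

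The core of the argument is then to show that the ratio $Z(N-1,N-1-m)/Z(N,N-1)$ is small. For the denominator I would use two lower bounds coming from the single ``star'' configuration: taking one outdegree equal to $N-1$ and all others $0$ gives $Z(N,N-1)\ge w_N w_1^{N-1}$, while splitting off one leaf ($d_N=0$) gives $Z(N,N-1)\ge w_1\,Z(N-1,N-1)$. For the numerator I would iterate Lemma \ref{l1} (with first argument $N-1$) a total of $m$ times, pushing the second argument from $N-1-m$ up to $N-1$:
\[
Z(N-1,N-1-m)\le \epsilon^m Z(N-1,N-1)+\frac{C_\epsilon^{N-1}}{1-\epsilon}
\le \frac{\epsilon^m}{w_1}\,Z(N,N-1)+\frac{C_\epsilon^{N-1}}{1-\epsilon},
\]
where the second inequality uses $Z(N-1,N-1)\le Z(N,N-1)/w_1$.

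Combining these bounds yields
\[
\nu_N(\sigma(s)=m+1)\le \frac{Nm}{N-1}\,w_{m+1}\left(\frac{\epsilon^m}{w_1}+\frac{C_\epsilon^{N-1}}{(1-\epsilon)\,w_N w_1^{N-1}}\right).
\]
This is where the hypothesis \eqref{condition1} does the real work: since $w_{n+1}/w_n\to\infty$, the weight $w_N$ eventually exceeds $B^N$ for every fixed $B$, so $C_\epsilon^{N-1}/(w_N w_1^{N-1})=(C_\epsilon/w_1)^{N-1}/w_N\to0$ for each fixed $\epsilon$. Hence, fixing $m$ and $\epsilon$ and letting $N\to\infty$, the second term vanishes and $\frac{Nm}{N-1}\to m$, so $\limsup_N \nu_N(\sigma(s)=m+1)\le m\,w_{m+1}\epsilon^m/w_1$. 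Since $m\ge1$ and $\epsilon\in(0,1)$ is arbitrary, letting $\epsilon\to0$ forces this $\limsup$ to be $0$, giving the claim.

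I expect the main obstacle to be the control of the numerator $Z(N-1,N-1-m)$. Lemma \ref{l1} is the natural tool, but it always leaves an additive error $C_\epsilon^{N-1}$, and the argument only closes because the star term makes the denominator $Z(N,N-1)$ \emph{superexponentially} large and thereby swamps that error. A secondary but essential point is the order of limits: $C_\epsilon$ blows up as $\epsilon\to0$, so $N$ must be sent to infinity first (with $m$ and $\epsilon$ held fixed), and only afterwards may one take $\epsilon\to0$.
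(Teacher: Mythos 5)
Your proposal is correct and follows essentially the same route as the paper's proof: the exact expression for $\nu_N(\sigma(s)=m+1)$ via the forest identity \eqref{dwassm}, the $m$-fold iteration of Lemma \ref{l1} to bound $Z(N-1,N-1-m)$, and the superexponential star lower bound on the partition function to absorb the additive error $C_\epsilon^{N-1}$, with the same order of limits ($N\to\infty$ first, then $\epsilon\to0$). The only differences are cosmetic (e.g.\ bounding the geometric error sum by $C_\epsilon^{N-1}/(1-\epsilon)$ rather than $kC_\epsilon^N$).
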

\begin {proof}
It suffices to show that 
\begin {equation} \label{conv1}
\nu_N(\sigma(s) = k) \rightarrow 0 
\end {equation}
for every fixed $k\geq 1$, since $\nu_N(\sigma(s) \geq m) = 1-
\sum_{k=1}^{m-1}\nu_N(\sigma(s) = k)$. If the vertex $s$, in a tree with $N$
edges, has degree $k+1$, then removing $s$ and $r$ but
leaving all edges from $s$ to its children as pendant edges, cf.\ Remark
\ref{Rroot},
leaves a forest with $k$
trees and $N-1$ edges. Therefore, using \eqref{partitionfunction} and 
\eqref{dwassm}, 
\begin {equation} \label{rsr}
\nu_N(\sigma(s) = k+1) = \frac{N}{N-1} k w_{k+1} \frac{Z(N-1,N-k-1)}{Z(N,N-1)}.
\end {equation}
Let $\epsilon > 0$ be given. Use Lemma \ref{l1} $k$ times to get
\begin {equation}
Z(N-1,N-k-1) \leq \epsilon^k Z(N-1,N-1) + k C_\epsilon^N
\end {equation}
and note that 
\begin {equation}
Z(N,N-1) \geq w_1 Z(N-1,N-1).
\end {equation}
 Since the branching weights satisfy (\ref{condition1}), 
$Z(N-1,N-1)\ge w_Nw_1^{N-2}$ grows
 super exponentially and in particular $Z(N-1,N-1) \geq (2C_\epsilon)^{N}$
 for $N$ large enough. 
Therefore
\begin {equation}
 \nu_N(\sigma(s) = k+1) \leq \frac{N}{N-1} k w_{k+1} (w_1^{-1} \epsilon^k + k 2^{-N})
\end {equation}
and (\ref{conv1}) follows since $\epsilon$ is arbitrary.
\end {proof}
\begin {lemma} \label{lsum}
 For any $N\geq 1$ and $n \geq 0$
\begin {equation}
 \sum_{\ell = 0}^N \ell w_{\ell+1}Z(N-1,n-\ell) = \frac{n}{N} Z(N,n).
\end {equation}
\end {lemma}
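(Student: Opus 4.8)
The plan is to read the left-hand side as a weighted sum over compositions of $n$ into $N$ nonnegative parts and then exploit the symmetry of the summand. Recalling that $Z(N-1,n-\ell)=\sum_{d_1+\cdots+d_{N-1}=n-\ell}\prod_{i=1}^{N-1}w_{d_i+1}$, the first step is to regard $\ell$ as the value of an $N$-th part $d_N$, absorbing the prefactor $\ell w_{\ell+1}=d_N w_{d_N+1}$ into the product. This rewrites the identity's left-hand side as
\[
\sum_{\ell=0}^N \ell\, w_{\ell+1}\,Z(N-1,n-\ell)
=\sum_{d_1+\cdots+d_N=n} d_N \prod_{i=1}^N w_{d_i+1}.
\]
Here the terms with $\ell>n$ contribute nothing, since then $n-\ell<0$ and $Z(N-1,n-\ell)=0$; and in the range $n\le N$ (the case needed throughout) every composition has $d_N\le n\le N$, so cutting the outer sum off at $\ell=N$ discards no nonzero term.

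The second and decisive step is symmetrization. Both the constraint $d_1+\cdots+d_N=n$ and the product $\prod_{i=1}^N w_{d_i+1}$ are invariant under permuting the indices $1,\dots,N$, so each coordinate plays the same role:
\[
\sum_{d_1+\cdots+d_N=n} d_j \prod_{i=1}^N w_{d_i+1}
=\sum_{d_1+\cdots+d_N=n} d_N \prod_{i=1}^N w_{d_i+1}
\qquad (1\le j\le N).
\]
Summing this over $j=1,\dots,N$ and using $\sum_{j=1}^N d_j=n$ on each composition gives
\[
N\sum_{d_1+\cdots+d_N=n} d_N \prod_{i=1}^N w_{d_i+1}
=\sum_{d_1+\cdots+d_N=n} n\prod_{i=1}^N w_{d_i+1}
=n\,Z(N,n),
\]
and dividing by $N$ and combining with the first display completes the argument.

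The computation is short, and its only real content is the symmetrization identity; I expect the main point to be one of care rather than difficulty, namely to confirm that the stated finite summation range for $\ell$ captures exactly the nonzero contributions. This rests on the vanishing of $Z$ at negative second arguments together with the bound $n\le N$, which ensures that no part $d_N$ can exceed the cutoff, so that no boundary terms are silently dropped.
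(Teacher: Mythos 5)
Your proof is correct and follows essentially the same route as the paper's: rewrite the left-hand side as $\sum_{d_1+\cdots+d_N=n} d_N \prod_i w_{d_i+1}$, then symmetrize over the index carrying the factor $d_j$ and sum over $j$. Your added remark about the summation cutoff (that the range $\ell\le N$ loses nothing precisely because $n\le N$ in all uses) is a point the paper glosses over — its displayed proof silently sums $\ell$ up to $n$ — but it does not change the argument.
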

\begin {proof}
 \begin {eqnarray}\nonumber
  && \sum_{\ell = 0}^n \ell w_{\ell+1}Z(N-1,n-\ell) = \sum_{d_1+\cdots + d_{N-1} + \ell = n} \ell w_{\ell+1} \prod_{i=1}^{N-1} w_{d_i+1} \\
&& = \sum_{d_1+\cdots + d_{N} = n} d_N  \prod_{i=1}^{N} w_{d_i+1}.
 \end {eqnarray}
By symmetry we can replace $d_N$ in front of the product by any $d_j$, $j=1,\ldots,N$. Summing over $j$ then gives the desired result.
\end {proof}

\begin {lemma} \label{l:2overL}
Assume $N>1$ and let $s_1$ be the first child of $s$. 
 If $L\geq 1$ and $k\geq L$, then
\begin {equation}
 \nu_N\bigpar{L+1 \leq \sigma(s_1) \leq k+1 ~|~ \sigma(s)  = k+1} 
\leq \frac{2}{L}.
\end {equation}
\end {lemma}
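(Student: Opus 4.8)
The plan is to condition on $\sigma(s)=k+1$, compute the conditional law of $\sigma(s_1)$ \emph{exactly} in terms of the forest partition functions $Z(\cdot,\cdot)$, and then control the resulting sum using Lemma \ref{lsum} together with the upper cutoff $\sigma(s_1)\le k+1$. First I would set up the decomposition exactly as in the proof of Lemma \ref{l:sdiverges}. Removing $r$ and $s$ (keeping the edges out of $s$ as pendant edges) turns a tree with $\sigma(s)=k+1$ into an ordered forest of $k$ trees on $N-1$ vertices, the first of which is rooted at $s_1$. Writing $j=\sigma(s_1)-1$ for the outdegree of $s_1$ and removing $s_1$ as well splits that first tree into the ordered forest of its $j$ child-subtrees, which together with the remaining $k-1$ trees forms an ordered forest of $j+k-1$ trees on $N-2$ vertices. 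Hence the joint unnormalized weight is $w_{k+1}w_{j+1}Z_{N-2}^{(j+k-1)}$, while the weight of the conditioning event $\{\sigma(s)=k+1\}$ is $w_{k+1}Z_{N-1}^{(k)}$, the latter being exactly the computation behind \eqref{rsr}. Dividing gives
\begin{equation*}
\nu_N\bigpar{\sigma(s_1)=j+1\mid \sigma(s)=k+1}=\frac{w_{j+1}Z_{N-2}^{(j+k-1)}}{Z_{N-1}^{(k)}}.
\end{equation*}

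Next I would substitute $Z_M^{(m)}=\frac{m}{M}Z(M,M-m)$ and abbreviate $a_j:=w_{j+1}Z(N-2,N-1-k-j)$, turning the displayed probability into $\frac{(j+k-1)(N-1)}{k(N-2)}\cdot\frac{a_j}{Z(N-1,N-1-k)}$. The crucial simplification is to eliminate the partition functions using Lemma \ref{lsum}: applied with $N-1$ in place of $N$ and $N-1-k$ in place of $n$, it states that $\sum_{j\ge 0} j\,a_j=\frac{N-1-k}{N-1}Z(N-1,N-1-k)$ (the terms with $j>N-1-k$ vanish). Substituting this expression for $Z(N-1,N-1-k)$ and summing over the relevant range yields
\begin{equation*}
\nu_N\bigpar{L+1\le \sigma(s_1)\le k+1\mid \sigma(s)=k+1}
=\frac{N-1-k}{k(N-2)}\cdot\frac{\sum_{j=L}^{k}(j+k-1)\,a_j}{\sum_{j\ge 0} j\,a_j}.
\end{equation*}

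Finally I would estimate this ratio, and here the upper cutoff does the essential work: for $L\le j\le k$ one has $j+k-1<2k$, while $j\ge L$ gives $a_j\le \frac{j}{L}a_j$; since all $a_j\ge 0$, these two facts bound the numerator by $\frac{2k}{L}\sum_{j\ge 0} j\,a_j$, so the sums $\sum_j j a_j$ cancel and what remains is $\frac{2(N-1-k)}{(N-2)L}$, which is $\le 2/L$ because $k\ge 1$. I expect the only real subtlety to be the combinatorial bookkeeping in the factorization $w_{k+1}w_{j+1}Z_{N-2}^{(j+k-1)}$ (pinning down the counts $j+k-1$ and $N-2$, and handling the degenerate ranges where $a_j=0$). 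It is worth stressing that the cutoff $\sigma(s_1)\le k+1$ is indispensable: without it the factor $j+k-1$ could be as large as $N-2$, degrading the bound to order $N/L$, and indeed a naive Markov estimate on the root outdegrees (ignoring the weights) fails for exactly this reason when $k\ll N$. It is the exact normalization supplied by Lemma \ref{lsum} — rather than any use of the superexponential growth \eqref{condition1} — that produces the clean constant $2$.
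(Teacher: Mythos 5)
Your proof is correct and follows essentially the same route as the paper's: the same removal of $r$, $s$, $s_1$ with pendant edges to get the exact conditional law \eqref{prob2} (you just phrase it via $Z_{N-1}^{(k)}$ and $Z_{N-2}^{(j+k-1)}$ instead of directly via $Z(\cdot,\cdot)$), the same application of Lemma \ref{lsum}, and the same two bounds $j+k-1<2k$ and $a_j\le (j/L)a_j$ yielding $2(N-1-k)/((N-2)L)\le 2/L$. No substantive differences.
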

\begin {proof}
If $\sigma(s) = k + 1 \geq 2$ and $\sigma(s_1) = \ell + 1 \geq 1$ then
removing the vertices $r$, $s$ and $s_1$, again leaving pendant edges, 
leaves a forest with $k+\ell -1$
trees and $N-2$ edges. Therefore (assuming $N\geq 3$),  
\begin {multline} \label{rssr}
 \nu_N(\sigma(s) = k+1, \sigma(s_1) = \ell+1) 
\\= \frac{N (k+\ell-1) w_{k+1}w_{\ell+1}}{N-2}  \frac{Z(N-2,N-1-k-\ell)}{Z(N,N-1)}.
\end {multline}
 By (\ref{rsr}) and (\ref{rssr}),
\begin {multline} \label{prob2} 
 \nu_N(\sigma(s_1) = \ell+1 ~|~ \sigma(s)  = k+1) 
\\= \frac{(N-1)(k+\ell-1) w_{\ell+1} }{(N-2)k} \frac{Z(N-2, N-1-k-\ell)}{Z(N-1,N-1-k)}.
\end {multline}
By Lemma \ref{lsum},
\begin{equation}
  \begin{split}
  \sum_{\ell \geq L} w_{\ell+1} Z(N-2,N-1-k-\ell) 
&\leq \frac{1}{L} \sum_{\ell \geq 0} \ell w_{\ell+1} Z(N-2,N-1-k-\ell) \\
 &= \frac{1}{L} \frac{N-1-k}{N-1} Z(N-1,N-1-k).	
  \end{split}
\end{equation}
Hence, (\ref{prob2}) implies
\begin {equation}
 \sum_{\ell = L}^{k}  \nu_N(\sigma(s_1) = \ell+1 ~|~ \sigma(s)  = k+1) 
\leq \frac{N-1-k}{N-2} \frac{2}{L} \leq \frac{2}{L}.
\end {equation}
\end {proof}
\begin {lemma} \label{l:s1is1}
As $N\rightarrow \infty$, $\nu_N(\sigma(s_1)=1) \rightarrow 1$.
\end {lemma}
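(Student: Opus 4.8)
The plan is to establish the equivalent statement $\nu_N(\sigma(s_1)\ge2)\to0$. The key preliminary observation is that the right-hand side of \eqref{rssr} is symmetric under interchanging $k$ and $\ell$, so the joint law of $(\sigma(s),\sigma(s_1))$ is exchangeable on the region $\{\sigma(s)\ge2,\ \sigma(s_1)\ge2\}$; the symmetry fails only at the boundary $\ell=0$, which is exactly the event $\{\sigma(s_1)=1\}$ that we want. Writing
\[
a=\nu_N\bigpar{2\le\sigma(s_1)<\sigma(s)},\quad
b=\nu_N\bigpar{2\le\sigma(s_1)=\sigma(s)},\quad
c=\nu_N\bigpar{\sigma(s_1)>\sigma(s)},
\]
a relabelling of the summation indices in \eqref{rssr} yields $c=a$, whence
\[
\nu_N(\sigma(s_1)\ge2)=a+b+c=2a+b\le 2(a+b)=2\,\nu_N\bigpar{2\le\sigma(s_1)\le\sigma(s)}.
\]
It therefore suffices to prove $\nu_N(2\le\sigma(s_1)\le\sigma(s))\to0$. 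The merit of this reduction is that it removes the event $\{\sigma(s_1)>\sigma(s)\}$, which is out of reach of Lemma~\ref{l:2overL} and whose direct control would require knowing that $\sigma(s)$ is of order $N$.

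Fix $L\ge1$ and split according to whether $\sigma(s_1)\le L$ or $\sigma(s_1)\ge L+1$. For the tail I would condition on $\sigma(s)=k+1$: when $k\ge L$, Lemma~\ref{l:2overL} bounds $\nu_N(L+1\le\sigma(s_1)\le k+1\mid\sigma(s)=k+1)$ by $2/L$, while the contribution of $k<L$ is at most $\nu_N(\sigma(s)\le L)$, which tends to $0$ by Lemma~\ref{l:sdiverges}. Hence $\limsup_N\nu_N(L+1\le\sigma(s_1)\le\sigma(s))\le 2/L$.

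The bounded part reduces, through the finite sum over $\ell=1,\dots,L-1$, to showing that $\nu_N(\sigma(s_1)=\ell+1)\to0$ for each fixed $\ell\ge1$; this is the computational heart. Summing \eqref{rssr} over $k$ and evaluating $\sum_k(k+\ell-1)w_{k+1}Z(N-2,N-1-k-\ell)$ in closed form by means of Lemma~\ref{lsum} and the convolution $Z(N-1,n)=\sum_k w_{k+1}Z(N-2,n-k)$, one obtains, after discarding a nonnegative boundary term that only strengthens the inequality,
\[
\nu_N(\sigma(s_1)=\ell+1)\le \frac{N\ell\,w_{\ell+1}}{N-1}\,\frac{Z(N-1,N-1-\ell)}{Z(N,N-1)}.
\]
I would then bound the ratio just as in the proof of Lemma~\ref{l:sdiverges}: applying Lemma~\ref{l1} $\ell$ times gives $Z(N-1,N-1-\ell)\le\epsilon^\ell Z(N-1,N-1)+\ell\,C_\epsilon^{N-1}$, while $Z(N,N-1)\ge w_1 Z(N-1,N-1)$, and by \eqref{condition1} the bound $Z(N-1,N-1)\ge w_N w_1^{N-2}$ grows superexponentially, so that $C_\epsilon^{N-1}/Z(N-1,N-1)\to0$. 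This yields $\limsup_N\nu_N(\sigma(s_1)=\ell+1)\le \ell w_{\ell+1}\epsilon^\ell/w_1$ for every $\epsilon>0$, hence the limit is $0$.

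Combining the three parts, $\limsup_N\nu_N(2\le\sigma(s_1)\le\sigma(s))\le 2/L$ for every $L$, so this probability vanishes and $\nu_N(\sigma(s_1)=1)\to1$. I expect the main obstacle to be the closed-form evaluation of the $k$-sum, together with checking that the discarded boundary term indeed has the right sign; by contrast the exchangeability step is soft, and it is exactly what makes the a priori dangerous event $\{\sigma(s_1)>\sigma(s)\}$ harmless.
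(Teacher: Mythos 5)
Your proof is correct, and its skeleton --- exploit the symmetry of \eqref{rssr} in $k$ and $\ell$ to reduce the problem to showing $\nu_N(2\le\sigma(s_1)\le\sigma(s))\to0$, control the tail $\sigma(s_1)\ge L+1$ by Lemma~\ref{l:2overL}, and kill each fixed value $\sigma(s_1)=\ell+1$ separately --- is exactly the paper's. The one place you genuinely diverge is the fixed-$\ell$ step. You sum \eqref{rssr} over $k$, evaluate the sum via Lemma~\ref{lsum} together with the convolution identity $Z(N-1,n)=\sum_k w_{k+1}Z(N-2,n-k)$, and then rerun the partition-function estimates (Lemma~\ref{l1} applied $\ell$ times, $Z(N,N-1)\ge w_1 Z(N-1,N-1)$, superexponential growth of $Z(N-1,N-1)$). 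This is valid --- the boundary term you discard is the $k=0$ term of the convolution, which is indeed nonnegative --- but it is more work than needed: the paper instead sums the symmetric formula \eqref{rssr} over $k\ge1$ to get directly $\nu_N(\sigma(s_1)=\ell+1)=\nu_N(\sigma(s)=\ell+1,\,\sigma(s_1)\ge2)\le\nu_N(\sigma(s)=\ell+1)\to0$ by Lemma~\ref{l:sdiverges}, with no further estimation. In fact your closed-form bound $\frac{N\ell w_{\ell+1}}{N-1}\cdot\frac{Z(N-1,N-1-\ell)}{Z(N,N-1)}$ is, by \eqref{rsr}, precisely $\nu_N(\sigma(s)=\ell+1)$, so your computation re-derives that same inequality the hard way and then re-proves a special case of Lemma~\ref{l:sdiverges}. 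Both routes are sound; the symmetry shortcut simply lets one reuse Lemma~\ref{l:sdiverges} instead of repeating its proof.
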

\begin{proof}
 Fix $L>1$ and an $\ell$ such that $1 \leq \ell < L$. Note that when $\ell \geq 1$ the formula (\ref{rssr}) is symmetric in $k$ and $\ell$. Therefore
 \begin{equation}
   \begin{split}
 \nu_N(\sigma(s_1) = \ell+1) &= \sum_{k=1}^\infty \nu_N(\sigma(s)=k+1, \sigma(s_1) = \ell+1) \\
&= \nu_N(\sigma(s) = \ell+1, \sigma(s_1) \geq 2) \leq \nu_N(\sigma(s) = \ell+1)
	    \end{split}
 \end{equation}
and thus $\nu_N(\sigma(s_1)=\ell+1) \rightarrow 0$ as $N\rightarrow \infty$ by Lemma \ref{l:sdiverges}. Next, Lemma \ref{l:2overL} implies
\begin{equation}
  \begin{split}
 \nu_N&(L+1 \leq \sigma(s_1) \leq \sigma(s)) \\ 
& = \sum_k \nu_N(L+1 \leq \sigma(s_1) \leq k+1~|~\sigma(s)=k+1) \nu_N(\sigma(s) = k+1) \leq \frac{2}{L}.
  \end{split}
\end{equation}
Thus
\begin{equation}
  \begin{split}
 \limsup_{N\rightarrow\infty} \nu_N(2 \leq \sigma(s_1) \leq \sigma(s)) 
&\leq \limsup_{N\rightarrow\infty} \left(\sum_{\ell = 1}^{L-1} \nu_N(\sigma(s_1) = \ell+1) + \frac{2}{L}\right) \\
&= \frac {2}{L}.
  \end{split}
\raisetag\baselineskip
\end{equation}
Since $L$ is arbitrary, $\nu_N(2 \leq \sigma(s_1) \leq \sigma(s)) \rightarrow 0$ as $N\rightarrow \infty$. By the symmetry of (\ref{rssr}) in $k$ and $\ell$ we also find that
\begin {equation}
\nu_N(2 \leq \sigma(s) \leq \sigma(s_1))  = \nu_N(2 \leq \sigma(s_1) \leq \sigma(s)) \rightarrow 0
\end {equation}
as $N\rightarrow \infty$. Finally, since $\sigma(s) \geq 2$, we have
\begin {equation}
 \nu_N(\sigma(s_1)\geq 2) \leq \nu_N(2 \leq \sigma(s) \leq \sigma(s_1))  + \nu_N(2 \leq \sigma(s_1) \leq \sigma(s)) \rightarrow 0
\end {equation}
as $N\rightarrow \infty$.
\end{proof}
\begin {proof}[Proof of Theorem \ref{th1}]
Let $R>0$.
By Lemma \ref{l:sdiverges}, $\gs(s)\pto\infty$, so $\nu_N\bigpar{\gs(s)\ge
R}\to1$. 
Given that $\sigma(s) \geq R$, denote the
  first $R-1$ children of $s$ by $s_1,\ldots,s_{R-1}$. Then by 
Lemma \ref{l:s1is1} and symmetry 
$\nu_N\bigpar{\gs(s_i)=1,\,\gs(s)\ge R}\to1$ for every $i\le R$ and thus
we find that
\begin {equation} \label{XXX}
 \nu_N(\sigma(s) \geq R,\,\sigma(s_1)=\cdots=\sigma(s_{R-1}) = 1) \rightarrow 1
\end {equation}
as $N\rightarrow \infty$.
Since $R$ is arbitrary, the result follows from the definition of the
topology on $R$, cf.\ the comment below (\ref{defmetric}).
\end {proof}
\begin {proof}[Proof of Theorem \ref{th2}]
First, we establish an upper bound on $Z_N$. Consider Equation
(\ref{partitionfunction}) for $Z_N$. For a given sequence $(d_i)$, let $m_j$
denote the number of indices $i$ for which $d_i = j$ where
$j=0,\ldots,N-1$. Instead of summing over $(d_i)$ we sum over $(m_j)$. For a
given sequence $(m_j)$ there are $\binom{N}{m_0,\ldots,m_{N-1}}$ sequences
$(d_i)$ and therefore, since $w_1=1$, 
\begin {eqnarray} \nonumber
 \frac{Z_N}{(N-1)!}&=& \sum_{\substack{m_0 + \cdots + m_{N-1} = N \\m_1 + 2m_2 + \cdots + (N-1) m_{N-1} = N-1 }} \prod_{i=0}^{N-1} \frac{w_{i+1}^{m_i}}{m_i!}\\
&=& \sum_{\substack{m_1 + 2m_2 + \cdots + (N-1) m_{N-1} = N-1 }} \frac{1}{\left(N-\sum_{j=1}^{N-1}m_j\right)!}\prod_{i=1}^{N-1} \frac{w_{i+1}^{m_i}}{m_i!}.\nonumber \\ \label{ZNfactorial}
\end {eqnarray}
Denote the maximum vertex degree by $M$ and fix a number $K \geq 2$. By
Lemma \ref{l:sdiverges}, it is sufficient to consider the case $M>K$. That
contribution to (\ref{ZNfactorial}) can be estimated 
by shifting $m_{M-1}\to m_{M-1}+1$ which yields the upper bound
\begin{equation}
  \begin{split}
& \sum_{\substack{m_1 + 2m_2 + \cdots + (M-1) m_{M-1} = N-M \\ M>K }} \frac{(M-1)!}{\left(M+\sum_{j=1}^{M-1}(j-1)m_j-1\right)!}\prod_{i=1}^{M-1} \frac{w_{i+1}^{m_i}}{m_i!} \\
&\qquad\qquad \leq \sum_{\substack{m_1 + 2m_2 + \cdots + (M-1) m_{M-1} = N-M \\ M>K }} \frac{\lambda^{m_1}}{m_1!}\prod_{i=2}^{M-1} \frac{(i!/M^{i-1})^{m_i}}{m_i!} \label{upperboundII} \\ 
&\qquad\qquad \leq \exp\left(\lambda + \sum_{i=2}^\infty\frac{i!}{(i\vee K)^{i-1}}\right)
 \end{split}
\raisetag{1.5\baselineskip}
\end{equation}
where $A\vee B$ denotes the maximum of $A$ and $B$. The last expression converges to $e^\lambda$ when $K\rightarrow\infty$ by dominated convergence.

Next we establish a corresponding
lower bound on $Z_N$. Consider the contribution to
(\ref{ZNfactorial}) from terms for which the only nonzero elements in the
sequence $(m_i)$ are $m_0$, $m_1$ and $m_k=1$ where $k\geq 2$ is
arbitrary;
thus $m_0=k$, $m_1=N-k-1$ and $m_k=1$.
 These terms provide the following lower bound of
(\ref{ZNfactorial}) 
\begin {equation} \label{lowerbound}
\sum_{k=2}^{N-1}w_{k+1}\frac{1}{k!} \frac{w_2^{N-1-k}}{(N-1-k)!} =  \sum_{\ell=0}^{N-3}\frac{\lambda^{\ell}}{\ell!} \rightarrow e^\lambda
\end {equation}
as $N\rightarrow \infty$.
This and \eqref{upperboundII} prove \eqref{th2z}.

To complete the proof, note that the probability that $T_N$ has $\gs(s)=N-j$
and that exactly $j$ of the $\gs(s)-1=N-j-1$ branches attached to $s$ have
size 2 and all others size 1 is, assuming $N>2j$ and using \eqref{th2z},
\begin{equation}
  \frac1{Z_N}\binom{N-j-1}{j}w_1^{N-j-1}w_2^{j}w_{N-j}
=
  \frac1{Z_N}\binom{N-j-1}{j}\gl^{j}(N-j-1)!
\to \frac{\gl^j}{j!}e^{-\gl}.
\end{equation}
These limits sum to 1 and yield the $\Pois(\gl)$ distribution in
\eqref{poisson}. 
\end {proof}

\begin{proof}[Proof of Theorem \ref{th3}]
Consider the weights $w_{n+1} = n!^\alpha$.
Write, again by \eqref{partitionfunction},
\begin {equation}
Z_N = \frac{1}{N} \sum_{d_1+\cdots+d_N = N-1} \prod_{i=1}^N d_i!^\ga.
\end {equation}
We get the lower bound 
\begin {equation} \label{an}
Z_N \geq (N-1)!^\alpha 
\end {equation}
by considering only the terms in $Z_N$ with one $d_i = N-1$, and all others
0
(i.e., stars).

Define $Z_N(k,\epsilon)$ as the contribution to $Z_N$ when precisely $k\geq 0$ vertices have degree greater than $\epsilon (N-1)$ where $\epsilon$ is some small positive number. First consider the case when $k=0$. Let $(d_i)_{i=1}^{N}$ be a sequence  which satisfies $d_i \leq \epsilon (N-1)$ for all $i$. Using the simple relation
\begin {equation} \label{fact}
 N!\,M! \leq (N+1)!\,(M-1)!, \qquad \qquad \text{for}~N\geq M-1
\end {equation}
we can distribute and add the smallest elements in $(d_i)_{i=1}^N$ to the
larger ones until each of them reaches $\epsilon (N-1)$. Thus we obtain the
upper bound,
using Stirling's formula, 
\begin {equation}
 \prod_{i=1}^N d_i!^\alpha 
\leq \ceil{\epsilon (N-1)}!^{\alpha/\epsilon} 
\leq \CC N^{2\ga/\eps} (N-1)!^\alpha \epsilon^{\alpha N}
\end {equation}
where $\CCx>0$ is a number independent of $N$ (but, as other constants
below, it may depend on $\ga$ and $\eps$). 
Therefore,
\begin {equation}\label{k=0}
 Z_N(0,\epsilon) 
\leq \CCx N^{2\ga/\eps} (N-1)!^\alpha \epsilon^{\alpha N} \binom{2N-2}{N-1} 
\leq \CCx N^{2\ga/\eps} (N-1)!^\alpha \epsilon^{\alpha N} 2^{2N}
\end {equation}
 which is negligible compared to (\ref{an}) as $\Ntoo$ for $\epsilon$ small enough.

 Next consider the case when two or more of the $d_i$ are larger than
 $\epsilon (N-1)$, i.e.~ when $k\geq 2$ in $Z_N(k,\epsilon)$. Clearly,
 $k<1/\epsilon$. Denote the $d_i$ which are greater than $\epsilon(N-1)$ by
 $d_{i_1},\ldots,d_{i_k}$ and let $D_j=d_{i_j}$. 
The indices $i_j$  can be chosen in $\binom{N}{k}$ ways.   
We will now lump together all the $D_i$ into a single one, i.e.~ we define
$$D = D_1 + \cdots + D_k.$$
For each $D$, there are at most $\binom{D+k-1}{k-1}$ choices of
$D_1,\dots,D_k$. 
Note that, with $\dx=\ceil{\eps(N-1)}$, using $D_i\ge\dx$ and Stirling's
formula again, 
\begin {eqnarray}
 \frac{D_1!\cdots D_k!}{D!} 
\le \frac{\dx^k}{(k\dx)!} 
\leq \CC N^{k} \left(\frac{1}{k}\right)^{k\epsilon N}
\end {eqnarray}
where $\CCx>0$ is independent of $N$. Thus, we get the upper bound
\begin{equation}\label{kge2}
  \begin{split}
\sum_{2 \leq k \leq 1/\epsilon} Z_N(k,\epsilon) 
&\leq  \CCx^\alpha \sum_{2 \leq k \leq 1/\epsilon} \binom{N}{k}N^{\alpha k} \left(\frac{1}{k}\right)^{\alpha k\epsilon N}  \\
  &\qquad\times\!\!\! ~\sum_{\substack{D+d_1+\cdots+d_{N-k} = N-1 \\ D > \epsilon (N-1), ~d_i \leq \epsilon (N-1), ~\forall i}} \binom{D+k-1}{k-1} D!^\alpha\prod_{i=1}^{N-k} d_i!^\alpha  \\
 &\leq \CC N^{3/\epsilon} \left(\frac{1}{2}\right)^{2\alpha \epsilon N} Z_N(1,\epsilon).	
  \end{split}
\raisetag\baselineskip
\end{equation}
where $\CCx>0$ is independent of $N$. This estimate, together with
\eqref{k=0},
shows that the main contribution to $Z_N$ for $N$ large comes from $Z_N(1,\epsilon)$. 

Finally, we consider $Z_N(1,\epsilon)$.  Using the representation as in
(\ref{ZNfactorial}) we have, writing $\neps=\floor{\eps(N-1)}$ for convenience,
\begin {equation}\label{newton}
\frac{ Z_N(1,\epsilon)}{(N-1)!}
= \sum_{D=\neps+1}^{N-1} 
\sum_{%\substack{m_0+\cdots + m_{\epsilon (N-1)} = N-1\\ 
m_1 + 2 m_2 + \cdots + \neps m_{\neps}=N-1-D}%}  
  \frac{D!^\alpha}{(N-1-\sum_{j=1}^\neps m_j)!} 
  \prod_{i=1}^{\neps} \frac{i!^{\alpha m_i}}{m_i!}
\end {equation}
where $D+1$ denotes the degree of the large vertex and  
$m_i$ denotes the number of vertices which have degree $i+1$. 
Consider one term in this sum and let $\tD=D+\sum_{i=K+1}^\neps{im_i}$,
adding the outdegrees of all vertices which have degree greater than $K+1$
to the large vertex.
Then
\begin{equation}
  \tD!\ge D!\cdot D^{\sum_{i=K+1}^\neps{im_i}}
\ge D!\cdot \neps^{\sum_{i=K+1}^\neps{im_i}}
\end{equation}
and
\begin{equation}
\Bigpar{N-1-\sum_{i=1}^K{m_i}}!
\le 
\Bigpar{N-1-\sum_{i=1}^\neps{m_i}}!\cdot N^{\sum_{i=K+1}^\neps{m_i}}
\end{equation}
Thus
\begin{equation}
\frac{D!^\ga}{(N-1-\sum_{i=1}^\neps{m_i})!}
\le 
\frac{\tD!^\ga\,N^{\sum_{i=K+1}^\neps{m_i}}}
 {(N-1-\sum_{i=1}^K{m_i})!\,\neps^{\ga\sum_{i=K+1}^\neps{im_i}}}
\end{equation}
and
\begin {multline}\label{ini}
\frac{ Z_N(1,\epsilon)}{(N-1)!}
\le \sum_{\tD=\neps+1}^{N-1} 
\;
\sum_{m_1  + \cdots + K m_{K}=N-1-\tD}
\;
\sum_{m_{K+1},\dots,m_{\neps}\ge0}
\\
  \frac{\tD!^\alpha}{(N-1-\sum_{j=1}^K m_j)!} 
  \prod_{i=1}^{K} \frac{i!^{\alpha m_i}}{m_i!}
  \prod_{i=K+1}^{\neps} \parfrac{Ni!^\alpha}{\neps^{i\ga}}^{m_i}\frac1{m_i!}
.
\end {multline}
We have
\begin{equation}
  \begin{split}
\sum_{m_{K+1},\dots,m_{\neps}}\;
  \prod_{i=K+1}^{\neps} \parfrac{Ni!^\alpha}{\neps^{i\ga}}^{m_i}\frac1{m_i!}
&=
  \prod_{i=K+1}^{\neps}
\exp\parfrac{Ni!^\alpha}{\neps^{i\ga}}	
\\&=
\exp\lrpar{
\sum_{i=K+1}^{\neps}
\frac{Ni!^\alpha}{\neps^{i\ga}}	}.
  \end{split}
\end{equation}
Let, using $\neps=\floor{\eps(N-1)}>\eps N/2$ (assuming $N$ large), 
\begin{equation}
  a_i=\frac{Ni!^\ga}{\neps^{i\ga}}
\le
\frac{2^{i\ga}i!^\ga}{\eps^{i\ga}}N^{1-i\ga}.
\end{equation}
Noting that $a_{i+1}/a_i=((i+1)/\neps)^\ga\le1$ for $i<\neps$, 
%and that $(K+1)\ga>1$
we find
\begin{equation}
  \begin{split}
  \sum_{i=K+1}^\neps a_i \le (K+1)a_{K+1}+Na_{2K+2}
&=O\lrpar{N^{1-(K+1)\ga}}+O\lrpar{N^{2-(2K+2)\ga}} 
\\&=o(1)	
  \end{split}
\end{equation}
and thus from \eqref{ini},
\begin {multline}
\frac{ Z_N(1,\epsilon)}{(N-1)!}
\le 
\bigpar{1+o(1)}\sum_{\tD=\neps+1}^{N-1} 
\;
\sum_{m_1  + \cdots + K m_{K}=N-1-\tD}
\\
  \frac{\tD!^\alpha}{(N-1-\sum_{j=1}^K m_j)!} 
  \prod_{i=1}^{K} \frac{i!^{\alpha m_i}}{m_i!}
.
\end {multline}
The sum here is just the sum in \eqref{newton} with $m_i=0$ for $i>K$, so we
have shown that $Z_N(1,\eps)$ is dominated by such terms. Recalling
\eqref{k=0} and \eqref{kge2} we see that 
\begin {equation}\label{isaac}
\frac{ Z_N}{(N-1)!}
= 
\bigpar{1+o(1)}\!\!\!
\sum_{m_1  + \cdots + K m_{K}< N-\neps-1}
%\\
  \frac{(N-1-\sum_{j=1}^K jm_j)!^\ga} {(N-1-\sum_{j=1}^K m_j)!} 
  \prod_{i=1}^{K} \frac{i!^{\alpha m_i}}{m_i!}
\end {equation}
and that
$Z_N$ is dominated by trees having
exactly one vertex of degree $>\eps(N-1)$ and all other vertices having
degrees $\le K+1$.

By Lemma \ref{l:sdiverges}, the contribution from trees with $\gs(s)\le K+1$ is
negligible, so it suffices to consider the case when the unique vertex with
high degree is $s$, which proves \ref{th3K}.

To obtain the more precise results in \ref{th3s} and \ref{th3m}, fix $i\le
K$, fix $m_j$ for $j\neq i$, and denote the summand in \eqref{isaac} by
$b(m_i)$.
Increasing $m_i$ by 1 decreases $D=N-1-\sum_{j=1}^K jm_j$ by $i$ and,
assuming still $D>\neps$ and recalling the definition of $n_i$ in \eqref{ni},
\begin{equation}
  \label{winston}
\frac{b(m_i+1)}{b(m_i)}
\le N \neps^{-i\ga} \frac{i!^{\alpha}}{m_i+1}
\le \CC  \frac{N^{1-i\ga} i!^{\alpha}}{m_i+1}
= \CCx  \frac{n_i}{m_i+1}\ .
\CCdef\CCwinston
\end{equation}
If $m_i\ge\floor{2\CCx n_i}$, this ratio is less than 1/2. In particular,
\begin{equation}
  \sum_{m_i\ge 3\CCx n_i} b(m_i) \le 2 b(\floor{3\CCx n_i})
\le 2^{2-\CCx n_i} b(\floor{2\CCx n_i}).
\end{equation}

If $i<1/\ga$, then $n_i\to\infty$ as \Ntoo.
Summing over all $m_j$, $j\neq i$, we see that the contribution to $Z_N$
from $m_i\ge 3\CCx n_i$ is negligible, so we may assume that $m_i<3\CCx n_i$.
In the exceptional case $i=1/\ga$, we obtain by the same argument that we
may assume $m_i<\log N$, say. In particular, since 
$n_i=O(N^{1-i\ga})=O(N^{1-\ga})$, we see that we may assume
$\gs(s)=D+1=N-\sum_{j=1}^K jm_j=N-O(N^{1-\ga})$, which proves \ref{th3s}.

For the remaining terms, we now may use 
$D=N-o(N)$ to 
improve \eqref{winston} to 
\begin{equation}
  \label{chu}
\frac{b(m_i+1)}{b(m_i)}
=(1+o(1)) N\cdot N^{-i\ga} \frac{i!^{\alpha}}{m_i+1}
= (1+o(1))  \frac{n_i}{m_i+1}.
\end{equation}
Assume $i<1/\ga$ and let $\gd>0$. 
We can repeat the argument above, using \eqref{chu} instead of
\eqref{winston} and $(1+\gd/2)n_i$ instead of $2\CCx n_i$, and conclude that
the 
terms with $m_i\ge(1+\gd)n_i$ are negligible. Similarly, \eqref{chu}
implies also that the terms with $m_i\le(1-\gd)n_i$ are negligible. 
Hence, $Z_N$ is dominated by terms with $(1-\gd)n_i<m_i<(1+\gd)n_i$.
Since $X_{i+1,N}=m_i$, this proves \ref{th3m} for $i<1/\ga$.

If $1/\ga$ is an integer and $i=K=1/\ga$, then it follows from \eqref{chu}
in the same way that $m_K$ is stochastically bounded and that 
$\nu_N\set{m_K=m+1}/\nu_N\set{m_K=m}\to n_K/(m+1)$ for every $m$, which
implies that $m_K\dto\Pois(n_K)$, completing the proof of \ref{th3m}.

Furthermore, \eqref{winston} implies, for all $m_i$ such that $D>\neps$,
\begin{equation}
  \frac{b(m_i)}{b(0)}
\le 
\frac{(\CCwinston n_i)^{m_i}}{m_i!}.
\end{equation}
Using this for each $i\le K$, we see that the general summand  in
\eqref{isaac} is at most
$\prod_{i=1}^K\frac{(\CCwinston n_i)^{m_i}}{m_i!}$,
and thus \eqref{isaac} yields
\begin{equation}
\frac{Z_N}{(N-1)!}
\le(1+o(1)) \frac{(N-1)^\ga}{(N-1)!}
\sum_{m_1,\dots,m_K}  \prod_{i=1}^K\frac{(\CCwinston n_i)^{m_i}}{m_i!}
\end{equation}
and
\begin{equation}
\frac{Z_N}{(N-1)!^\ga}
\le(1+o(1)) 
  \prod_{i=1}^K\exp(\CCwinston n_i)
=  \exp\Bigpar{\sum_{i=1}^K\CCwinston n_i+o(1)},
\end{equation}
which proves \eqref{th3z}.

Finally, we show \ref{th3trees}.
If $\tau$ is a tree in $\gG_N$ such that all vertices except $s$ have
degrees $\le K+1$, but some branch attached to $s$ has more than $K+1$
vertices, pick the first such branch and find, by breadth-first search, say,
a subtree $\tau_0$ of that branch with exactly $K+2$ vertices.
Rearrange the edges inside $\tau_0$ so that $\tau_0$ is replaced by a star
with center adjacent to $s$; this produces a vertex of degree $K+2$. 
Let $\tau'\in\gG_N$ be the result of making this change inside $\tau$.
We
have changed the degree of (at most) $K+2$ vertices, and since all old and
new degrees are at most $2K+1$, the weights of $\tau$ and $\tau'$ differ by
at most a constant factor. Furthermore, $\tau'$ has exactly one vertex of
degree $K+2$, and thus only a bounded number of trees $\tau$ can produce the
same $\tau'$. Consequently, 
\begin{multline}
  \P(T_n \text{ has a branch of size }>K+1)
\\\le
\CC
  \P(T_n \text{ has a vertex $\neq s$ of degree }>K+1),
%\to0.
\end{multline}
and this probability tends to 0 by \ref{th3K}.
\end{proof}

\begin{proof}[Proof of Theorem \ref{th4}]
Recall that $Z_N$ is given by \eqref{isaac}, and that the significant terms
have $m_i=\etto n_i =O(N^{1-i\ga})$, except when $i=1/\ga$.

Let us first note that if $1/\ga$ is an integer and $i=K=1/\ga$, then, see
the proof of Theorem \ref{th3}, \eqref{chu} implies that $X_{K+1,N}=m_K$ has
an asymptotic Poisson distribution $\Pois(n_K)$, which further is
asymptotically independent of $X_{i,N}$, $i\le K$; furthermore,
$\sum_{m_K}b(m_K)=\exp(n_K+o(1))b(0)$.
In the sequel we thus assume $m_K=0$ and sum only over $m_i$, $i<K$,
when $i=K=1/\ga$; we omit the trivial modifications below in this case.

Define, for a fixed $\eta\in(0,1)$, 
$V=\prod_{i=1}^n[(1-\eta)n_i,(1+\eta)n_i]$. In the sequel we consider only
$(m_i)_1^K\in V$; recall that it suffices to sum over such $(m_i)$ in
\eqref{isaac}.	
For more compact notation, write
\begin {equation}
A = \sum_{i=1}^K m_i \qquad \text{and} \qquad B = \sum_{i=1}^K i m_i.
\end {equation}
Note that $A$ and $B$ are $O(N^{1-\alpha})$. Use Stirling's approximation on the first factor in the sum in \eqref{isaac} to get
\begin{align}
%  \begin{split}
& \frac{(N-1-B)!^\alpha}{(N-1-A)!} \notag\\ 
& = \sqrt{\frac{(2\pi(N-1-B))^\alpha}{2\pi(N-1-A)}} \left(\frac{N-1-B}{e}\right)^{\alpha(N-1-B)}  \left(\frac{e}{N-1-A}\right)^{N-1-A} (1+O(N^{-1})) \notag\\ 
& = \sqrt{(2\pi(N-1))^{\alpha -1}}
	\left(\frac{N-1}{e}\right)^{(\alpha-1)(N-1)} (N-1)^{A-\alpha B} \notag\\ 
& \qquad \times ~\exp\Bigg\{\alpha B -A + \alpha\left(N-1-B\right)\log\left(1-\frac{B}{N-1}\right) \notag\\ 
& \qquad\qquad \qquad  - \left(N-1-A\right)\log\left(1-\frac{A}{N-1}\right)\Bigg\} (1+O(N^{-\alpha})) 
\notag\\ & 
= (N-1)!^{\alpha-1} (N-1)^{A-\alpha B}
\exp\Bigg\{\sum_{j=2}^K\frac{\alpha B^{j}  - A^j}{j (j-1)(N-1)^{j-1}}\Bigg\} 
(1+o(1)) 
\notag\\&
= (N-1)!^{\alpha-1} N^{A-\alpha B}
\exp\Bigg\{\sum_{j=2}^K\frac{\alpha B^{j}  - A^j}{j (j-1)N^{j-1}}\Bigg\} 
(1+o(1)),
\label{wchu}
%  \end{split}
%\raisetag{1.5\baselineskip}
\end{align}
where in the last step we expanded the logarithms and kept only powers of
$A$ and $B$ which contribute for large $N$, and then approximated $N-1$ by $N$.
Hence, \eqref{isaac} yields, using Stirling's formula again,
\begin{equation}\label{zga}
  \begin{split}
\frac{Z_N}{(N-1)!^\ga}
&=\etto
\sum_{(m_i)\in V} 	
\exp\lrcpar{\sum_{j=2}^K\frac{\alpha B^{j}  - A^j}{j (j-1)N^{j-1}}} 
  \prod_{i=1}^{K}\frac{N^{m_i-i\ga m_i} i!^{\alpha m_i}}{m_i!}
\notag\\&
=\sum_{(m_i)\in V} 	\exp\bigpar{f(\mm)+o(1)},
  \end{split}
\raisetag{1.5\baselineskip}
\end{equation}
where
\begin{equation}\label{fdef}
  \begin{split}
  f(\mm)=\sumik \Bigpar{(1-\ga i)m_i\log N + \ga m_i\log(i!)
-m_i\log m_i
\\+m_i
 -\frac12\log(2\pi m_i)}
+
\sum_{j=2}^K\frac{\alpha B^{j}  - A^j}{j (j-1)N^{j-1}}	.
  \end{split}
\end{equation}
Regard $f$ as a function of real variables $\mm$. Then, for $\mm\in V$,
which entails $A,B=O(N^{1-\ga})$,
\begin{equation}\label{f'}
  \begin{split}
  \frac{\partial f}{\partial m_i}
&=
(1-\ga i)\log N + \ga \log (i!)-\log m_i -\frac1{2m_i}
+\sum_{j=2}^K\frac{\alpha i B^{j-1}  - A^{j-1}}{ (j-1)N^{j-1}}	
\\&
=
\log n_i-\log m_i -\frac1{2m_i}
+\frac{\ga i B -A}{N} + O\bigpar{N^{-2\ga}}
\\&
=
\log n_i-\log m_i -\frac1{2m_i}
-\frac{(1-i\ga)m_1}{N} + O\bigpar{N^{-2\ga}}
\\&
= \log n_i-\log m_i + o(1) 
  \end{split}
\raisetag{\baselineskip}
\end{equation}
and, similarly,
\begin{equation} \label{f''}  
  \frac{\partial^2 f}{\partial m_i\partial m_j}
=
-\frac{\gdij}{m_i} +O\Bigparfrac{\gdij}{m_im_j}
+O\Bigparfrac{1}{N}.
\end{equation}

$V$ is compact and $f$ continuous, so $f$ attains its maximum in $V$ at some
point $\nnxx=(\nnx)\in V$.
By \eqref{f'}, for large $N$, 
$\frac{\partial f}{\partial m_i}>0$ when $m_i=(1-\eta)n_i$ and 
$\frac{\partial f}{\partial m_i}<0$ when $m_i=(1+\eta)n_i$, so the maximum
is not attained on the boundary of $V$, i.e.~$|\nx_i-n_i|<\eta n_i$. 
Consequently,
by \eqref{f'},
\begin{equation}\label{f'0}
  0=\frac{\partial f}{\partial m_i}(\nnxx)=\log n_i-\log \nx_i+o(1)
\end{equation}
and thus $\nx_i=\etto n_i$.
A Taylor expansion of $f$ at $\nnxx$ yields, using \eqref{f'0} and \eqref{f''},
for $\mmqq=(\mm)\in V$,
\begin{multline}\label{emma}
f(\mmqq)
=
f(\nnxx)-\frac12\sumik\lrpar{ \frac{(m_i-\nx_i)^2}{\nx_i}
+ O\Bigparfrac{|m_i-\nx_i|^2+|m_i-\nx_i|^3}{n_i^2}}
\\
+O\Bigparfrac{|\mmqq-\nnxx|^2}{N}  
\end{multline}
Choosing $\eta$ small enough, this implies first (for large $N$)
\begin{equation}
  f(\mmqq)
\le
f(\nnxx)-\frac13\sumik \frac{(m_i-\nx_i)^2}{\nx_i},
\end{equation}
which implies that it suffices to consider terms in \eqref{zga} with, say,
$|m_i-\nx_i|<n_i^{1/2} \log N$; let $V_1\subset V$ be the set of such
$\mmqq$. 
For such terms, \eqref{emma} yields
\begin{equation}
  f(\mmqq)
=
f(\nnxx)-\frac12\sumik \frac{(m_i-\nx_i)^2}{\nx_i}+o(1),
\end{equation}
and thus by \eqref{zga}, 
letting $\gb=f(\nnxx)$ be the maximum value of $f$ on $V$,
\begin{equation}\label{magn}
  \frac{Z_N}{(N-1)!^\ga}
=\etto
\sum_{(m_i)\in V_1} 	
\exp\Bigpar{\gb-\frac12\sumik \frac{(m_i-\nx_i)^2}{\nx_i}+o(1)}.
\end{equation}
Since each term here corresponds to the case $X_{i+1,N}=m_i$, $i=1,\dots,K$,
and $\nx_i=\etto n_i$, \eqref{th4gauss} follows.
Furthermore, \eqref{magn} also yields the Poisson approximation result in
Remark \ref{Rth4}, since the Poisson probabilities $\P(Y_{i,N}=m_i,\forall i)$
can easily be approximated by the same Gaussian as in \eqref{magn}; we omit the
details.  

In order to obtain more precise estimates of $\nx_i$, we go back to
\eqref{f'} and refine \eqref{f'0} to
\begin{equation}
  0=\frac{\partial f}{\partial m_i}(\nnxx)=\log n_i-\log \nx_i
-\frac{(1-i\ga)\nx_1}{N}+O\bigpar{N^{-2\ga}+N^{i\ga-1}}
\end{equation}
which yields
\begin{equation}
\log \frac{\nx_i}{n_i}
=
-\frac{(1-i\ga)\nx_1}{N}+O\bigpar{N^{-2\ga}+N^{i\ga-1}}
%=
%O\bigpar{N^{-\ga}+N^{i\ga-1}}.
\end{equation}
and thus, recalling $\nx_1/N=O(N^{-\ga})$,
\begin{equation}\label{erika}
\frac{\nx_i}{n_i}
=
1-\frac{(1-i\ga)\nx_1}{N}+O\bigpar{N^{-2\ga}+N^{i\ga-1}}
\end{equation}
Taking $i=1$ we find $\nx_1/n_1=1+O(N^{-\ga})$, and thus 
$\nx_1-n_1=O(N^{1-2\ga})$, so \eqref{erika} yields 
\begin{equation}\label{matt}
\frac{\nx_i}{n_i}
=
1-\frac{(1-i\ga)n_1}{N}+O\bigpar{N^{-2\ga}+N^{i\ga-1}},
\end{equation}
establishing \eqref{thx1}.

We obtain \eqref{thx2a}--\eqref{thx4b} from \eqref{th4gauss} and
\eqref{thx1}
by checking that in each case,
the omitted terms in the numerator are of smaller order
than the denominator.
\end{proof}

Finally, to evaluate the partition function, we approximate the sum in
\eqref{magn} by a Gaussian integral and obtain
\begin{equation}\label{magn2}
  \frac{Z_N}{(N-1)!^\ga}
=\etto e^{\gb}\prodik \sqrt{2\pi \nx_i}
= e^{\gb+o(1)}\prodik \sqrt{2\pi n_i}.
\end{equation}
We have $\gb=f(\nnxx)$. 
Further, \eqref{matt} shows 
$\nx_i-n_i=O\bigpar{n_i N^{-\ga}}=O\bigpar{N^{1-2\ga}}$, and it follows from
\eqref{emma} that, with $\nnqq=(\nnq)$, 
\begin{equation}\label{sjw}
  f(\nnqq)=f(\nnxx)+O\bigpar{N^{1-3\ga}}
=\gb+O\bigpar{N^{1-3\ga}},
\end{equation}
so it remains to evaluate $f(\nnqq)$.
For $\mmqq=\nnqq$, the final sum in
\eqref{fdef} is 
\begin{equation}
\frac{\alpha B^{2}  - A^2}{2N}+ O\bigpar{N^{1-3\ga}}
=
\frac{(\alpha-1) n_1^2}{2N}+ O\bigpar{N^{1-3\ga}},
\end{equation}
and thus, after some cancellations,
\begin{equation}\label{cec}
  f(\nnqq)=\sumik \Bigpar{n_i-\frac12\log(2\pi n_i)}
-\frac{(1-\alpha) n_1^2}{2N}+ O\bigpar{N^{1-3\ga}}.
\end{equation}
Hence, \eqref{magn2} yields, with \eqref{sjw} and \eqref{cec} and
recalling $n_1=N^{1-\ga}$,
\begin{equation}%\label{jesp}
  \frac{Z_N}{(N-1)!^\ga}
=
 \exp\lrpar{\sumik n_i
-\frac{1-\alpha}2 N^{1-2\ga}+ O\bigpar{N^{1-3\ga}}
+o(1)}.
\end{equation}
We substitute $n_1$ and $n_2$ from \eqref{ni} and drop $n_i$ for $i\ge3$,
which yields
\eqref{rz}.

\begin{ack}
  This research was done while the authors visited NOR\-DITA, Stockholm, 
during the  program
\emph{Random Geometry and Applications}, 2010.
\end{ack}

\newcommand\arxiv[1]{\url{http://arxiv.org/#1}}
\begin{thebibliography}{99}

\bibitem{adj}
J. Ambjorn, B. Durhuus and T. Jonsson, 
\emph{Quantum Geometry: a Statistical Field Theory Approach}.  
Cambridge University Press, Cambridge, 1997.

\bibitem{durhuus}
B. Durhuus,
Probabilistic aspects of infinite trees and surfaces.
\emph{Acta Physica Polonica B} \textbf{34} (2003), 4795--4811

\bibitem{djw}
B. Durhuus, T. Jonsson and J.~F. Wheater, 
The spectral dimension of generic trees.  
\emph{J.~Stat.~Phys.} \textbf{128} (2007),
1237--1260.

\bibitem{Dwass}
M. Dwass, 
The total progeny in a branching process and a related random walk.
\emph{J. Appl. Probab.} \textbf{6} (1969), 682--686.

\bibitem{flajolet} 
P.~Flajolet and R.~Sedgewick, 
\emph{Analytic Combinatorics}.
Cambridge Univ. Press, Cambridge, UK, 2009.
%% http://algo.inria.fr/flajolet/Publications/books.html.

\bibitem{sdf} T. Jonsson and  S.~\"O.~Stef\'ansson, 
Condensation in nongeneric trees.
\emph{Journal of Statistical Physics}, \textbf{142} (2011), no. 2,
277--313.
%\arxiv{1009.1826}  % [cond-mat.stat-mech].

\bibitem{Kolchin}
V. F. Kolchin,
\emph{Random Mappings}.
Optimization Software, New York, 1986.

\bibitem{LPP} %[Lyons, Pemantle and Peres(1995)]{LPP}
R. Lyons, R. Pemantle and Y. Peres,
Conceptual proofs of $L\log L$ criteria for mean behavior of branching
processes.
\emph{Annals of Probability} \textbf{23}  (1995), no.\ 3, 1125--1138. 

\bibitem{MeirMoon}
A.~Meir and J. W.~Moon, 
On the altitude of nodes in random trees.  
\emph{Canad.\ J.\ Math.}, \textbf{30} (1978), 997--1015.

\bibitem{Pitman:enum}  
J. Pitman, 
Enumerations of trees and forests related to branching processes and
random walks. 
\emph{Microsurveys in Discrete Probability (Princeton, NJ, 1997)},
DIMACS Series in Discrete Mathematics and Theoretical Computer Science, 41, 
Amer. Math. Soc., Providence, RI, 1998, pp. 163--180.

\end {thebibliography}

\end{document}